\pdfoutput=1
\documentclass[11pt,reqno]{amsart}

\usepackage[utf8]{inputenc}
\usepackage{lmodern}
\usepackage[T1]{fontenc}
\usepackage{amsmath,amssymb,amsthm,mathrsfs}
\usepackage[margin=1.1in]{geometry}
\usepackage{graphicx}
\usepackage[hidelinks]{hyperref}

\theoremstyle{plain}
\newtheorem{theorem}{Theorem}[section]
\newtheorem{proposition}[theorem]{Proposition}
\newtheorem{lemma}[theorem]{Lemma}
\newtheorem{corollary}[theorem]{Corollary}

\theoremstyle{remark}
\newtheorem{remark}[theorem]{Remark}

\numberwithin{equation}{section}

\begin{document}

\title[Holes in planar parallel sets]{Holes in planar parallel sets:
an integrated Betti-number bound and its pointwise failure}

\author{Tristan Guillaume}
\address{CY Cergy Paris Universit\'e, Laboratoire Thema,
33 boulevard du port, F-95011 Cergy-Pontoise Cedex, France}
\email{tristan.guillaume@cyu.fr}

\subjclass[2020]{52A38; 28A75; 55N05; 57N05; 60D05}
\keywords{parallel set; offset; distance function; critical value;
Betti number; Mayer--Vietoris; Wiener sausage}

\begin{abstract}
Let \(A\) be a nonempty compact subset of the plane and let \(A^{(r)}\) be its
parallel set at distance \(r\). We prove that the number of holes of
\(A^{(r)}\) --- the number of bounded components of its complement, which is
its first Betti number --- satisfies
\(\int_{r_{0}}^{\infty}\beta_{1}(A^{(r)})\,dr \leq 4050\,(\operatorname{diam}A)^{4}r_{0}^{-3}\)
for every \(r_{0}>0\), the integrand vanishing for
\(r \geq \operatorname{diam}A/\sqrt{3}\). The proof rests on Fu's theorem that
the critical values of the distance function of a planar compact set form a
set of vanishing half-dimensional Hausdorff measure, on two lemmas of Rataj,
Spodarev and Meschenmoser, and on a square-root summability estimate for the
gaps of the critical-value set, of which we give a complete proof. We show by
an explicit family of curves --- two combs facing each other --- that no
analogous bound can hold at a fixed radius: a connected curve of bounded
length, diameter, oscillation count, parallel-set area and parallel-set
perimeter can have arbitrarily many holes at one radius, so the integrated
estimate cannot be replaced by a fixed-radius bound in terms of these coarse
geometric quantities. We also bound the hole count uniformly in the radius by the
number of components of local maxima of the distance function, and record a
bound on the boundary length of a parallel set by its area. The results
supply the deterministic input for limit theorems on the persistent homology
of the Wiener sausage.
\end{abstract}

\maketitle

\section{Introduction}\label{sec:intro}

Let \(A \subset \mathbb{R}^{2}\) be a nonempty compact set. For \(r \geq 0\)
its closed parallel set at distance \(r\) is
\[A^{(r)}: = \{ x \in \mathbb{R}^{2}:\ \operatorname{dist}(x,A) \leq r\},\]
the union of the closed discs of radius \(r\) centred at the points of \(A\).
In computational topology the same set is called the \(r\)-offset of \(A\) and
the family \((A^{(r)})_{r \geq 0}\) the offset filtration \cite{ref12}; we use
the older word. Parallel sets are classical objects of convex and integral
geometry --- Steiner's formula, Kneser's inequality \cite{Kne51} and Stach\'o's
study of the parallel volume function \cite{Sta76} concern their area --- and
the way their topology changes with \(r\) is governed by the critical points of
the distance function \(d_{A}\), in the sense of Grove \cite{Gr93} and Fu
\cite{Fu85}: for radii in an interval free of critical values the parallel
sets are homotopy equivalent, indeed isotopic, by the isotopy lemma of critical
point theory, which is the mechanism behind the weak feature size of Chazal
and Lieutier \cite{CL07} and the sampling theory of Chazal, Cohen-Steiner and
Lieutier \cite{CCL09}. In the plane, Fu \cite{Fu85} proved that the set
\(C(A)\) of critical values has vanishing one-half-dimensional Hausdorff
measure, so that the topology of \(A^{(r)}\) is locally constant in \(r\) off
a set of radii that is not merely null but half-dimensionally null. Rataj and
Zaj\'{\i}\v{c}ek \cite{RZ20} sharpened this and showed the sharpening optimal:
the critical values of \(A\) above any fixed level \(s > 0\) form a compact
null set whose gaps have summable square roots, and conversely every such set
bounded away from \(0\) is contained in the critical-value set of some compact
planar set.

This paper is about the number of holes of \(A^{(r)}\): the number
\(\beta_{1}(A^{(r)})\) of bounded connected components of
\(\mathbb{R}^{2} \smallsetminus A^{(r)}\), which by Alexander duality is the
first Betti number of \(A^{(r)}\) in \v{C}ech cohomology. Since \(A^{(r)}\) is
contractible for \(r > \operatorname{diam}A\), holes are a transient
phenomenon, and the question we ask is how many of them a single compact set
can produce, and at which radii. The answer depends sharply on whether one
asks at a fixed radius or across radii. Write \(D = \operatorname{diam}A\).

\begin{theorem}[integrated bound]\label{thm:A}
For every compact \(A \subseteq \mathbb{R}^{2}\) and every \(r_{0} > 0\),
\[\int_{r_{0}}^{\infty}\beta_{1}\left( A^{(r)} \right)dr\, \leq \, 4050\,\frac{D^{4}}{r_{0}^{3}},\]
the integrand vanishing for \(r \geq D/\sqrt{3}\).
\end{theorem}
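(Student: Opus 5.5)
The vanishing statement comes first, and it is elementary. If \(U\) is a bounded component of \(\mathbb{R}^{2}\smallsetminus A^{(r)}\) and \(y\in U\), then \(y\) lies in the convex hull of \(A\). Otherwise a line would separate \(y\) from \(A\), and the half-plane beyond it would be an unbounded connected set meeting \(U\) and avoiding \(A^{(r)}\). So the disc \(B(y,d_{A}(y))\) misses \(A\), while its centre lies in \(\operatorname{conv}A\). A closed half-disc argument then shows \(d_{A}(y)\) is at most the circumradius of \(A\), which by Jung's theorem is \(\le D/\sqrt{3}\). Since \(d_{A}(y)>r\) on \(U\), no hole exists once \(r\ge D/\sqrt{3}\). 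This also reduces the integral to \([r_{0},D/\sqrt3]\); we may assume \(r_{0}<D/\sqrt3\), as otherwise there is nothing to prove.

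For the integral I would decompose \([r_{0},D/\sqrt3]\) along the critical-value set \(C(A)\). By Fu's theorem \(C(A)\) is Lebesgue-null, so the integral is a sum over the gaps \((a,b)\) of \(C(A)\cap[r_{0},D/\sqrt3]\). By the isotopy lemma \(\beta_{1}(A^{(r)})\) is constant on each gap, equal to some \(N_{(a,b)}\). Hence
\[
\int_{r_{0}}^{\infty}\beta_{1}(A^{(r)})\,dr=\sum_{(a,b)}N_{(a,b)}\,(b-a).
\]
Two estimates are then needed.

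\textbf{(i) A per-gap hole bound.} Every hole \(U\) at a radius \(r\in(a,b)\) contains a maximum point \(y\) of \(d_{A}\) on \(\overline U\). Its value \(t=d_{A}(y)>r\) is a critical value, so \(t\ge b\). The holes are disjoint, each contains a disc of radius \(t-r\ge b-r\), and all lie in \(\operatorname{conv}A\), of area \(\le \pi D^{2}/4\). Taking \(r\) close to \(a\) and using constancy on the gap gives the crude bound \(N_{(a,b)}\lesssim D^{2}/(b-a)^{2}\). This is too weak to sum on its own, so I would sharpen it with the two lemmas of Rataj, Spodarev and Meschenmoser. Near its deepest point a hole is a curvilinear polygon bounded by arcs of radius \(r\ge r_{0}\). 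A hole that survives a lapse \(b-a\) of radii should therefore have area at least of order \(r_{0}^{3/2}(b-a)^{1/2}\) rather than \((b-a)^{2}\): the sagitta of an arc of radius \(r_{0}\) over a chord of length \(\ell\) is \(\sim\ell^{2}/r_{0}\), and the hole's width must absorb the increment \(b-a\). The target estimate is
\[
N_{(a,b)}\,(b-a)\le C\,\frac{D^{2}}{r_{0}^{3/2}}\,(b-a)^{1/2}.
\]

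\textbf{(ii) Square-root summability of the gaps.} I would prove, in the quantitative form promised in the abstract, that
\[
\sum_{(a,b)}(b-a)^{1/2}\le C'\,\frac{D^{2}}{r_{0}^{3/2}}.
\]
This follows Fu's argument. Cover the \(r_{0}\)-level critical structure by \(O(D^{2}/r_{0}^{2})\) discs of radius comparable to \(r_{0}\). In each disc the distance function restricted to critical points is "\(\tfrac12\)-Hölder-monotone": the difference of critical values is at most \(C|x-x'|^{2}/r_{0}\). The ordered critical values are then controlled by a curve of length \(O(D^{2}/r_{0})\), and Cauchy--Schwarz yields the square-root sum. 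Multiplying (i) by (ii) and summing gives \(C C' D^{4}/r_{0}^{3}\), and the numerical constants are then tracked to reach \(4050\).

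The main obstacle is step (i). The comb examples announced in the abstract show that no bound on \(N_{(a,b)}\) depending only on \(D\) and \(r\) can hold, so the estimate must genuinely use the gap length and the persistence of each hole across the whole gap. I would first try to attach to each hole a region of area \(\gtrsim r_{0}^{3/2}(b-a)^{1/2}\), namely the set swept between the levels \(a\) and \(b\) inside the hole. I would then check disjointness of these regions via the Rataj--Spodarev--Meschenmoser lemmas, which describe the boundary of a parallel set as a union of arcs of radius \(r\).
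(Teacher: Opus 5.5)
\textbf{Verdict: your outline is right, but both of its two key estimates rest on arguments that fail as proposed; the first one, the per-gap hole bound, fails outright.}

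Your architecture is the paper's. You split the radius axis along the gaps of \(C(A)\), bound the hole count on a gap \((a,b)\) by a multiple of \(D^{2}r_{0}^{-3/2}(b-a)^{-1/2}\), and sum using square-root summability of the gaps. Your target (i) is even true. But the argument you propose for it rests on a false claim: neither a hole, nor the set swept inside it between the levels \(a\) and \(b\), need have area of order \(r^{3/2}(b-a)^{1/2}\).

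Take two rows of points \(\{(kL+L/2,\pm h)\}\), which is the mechanism of Theorem~\ref{thm:combs}. Holes exist exactly for \(h\leq r<t\), with \(t=\sqrt{h^{2}+L^{2}/4}\), one per period. Put \(\varepsilon=t-h\approx L^{2}/(8h)\). Each hole, even just above level \(h\), is a lens of width about \(L\) and height about \(2\varepsilon\), so its area is \(O(h^{1/2}\varepsilon^{3/2})\), too small by the factor \(\varepsilon/h\). The count, about one hole per length \(L\approx\sqrt{8h\varepsilon}\), does obey the \(\varepsilon^{-1/2}\) law. However, the disjoint regions that certify it are columns of width \(\sim\sqrt{r\varepsilon}\) and length \(\sim r\). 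They run from the maximum of \(d_{A}\) in each hole towards its nearest points (compare the segments \([x,(x+p)/2]\) of Lemma~\ref{lem:rsm2}), and they lie mostly in \(A^{(r)}\), not in the holes. Your sagitta heuristic gets the width right but assigns the hole a length that only the column has.

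The estimate you need is exactly Lemma~\ref{lem:rsm1}, which the paper imports rather than reproves. With it your per-gap route closes:
\begin{itemize}
\item Work at the midpoint \(r=(a+b)/2\) and take \(\varepsilon=\min\{b-r,r/2\}\geq(b-a)/4\).
\item This gives \(N_{(a,b)}(b-a)\leq 9\sqrt{2}\,\pi D^{2}r_{0}^{-3/2}(b-a)^{1/2}\).
\item Lemma~\ref{lem:gaps} then yields the constant \(288\sqrt{2}\,\pi^{2}\approx 4020\leq 4050\).
\end{itemize}
The paper instead integrates the pointwise bound across each gap, where its singularity \((b-r)^{-1/2}\) is integrable, and so never needs constancy of \(\beta_{1}\) on gaps.

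Your sketch of (ii) has a comparable gap. Summed along a curve through the critical points, the Ferry-type estimate \(\lvert h-h'\rvert\leq C\lvert x-x'\rvert^{2}/r_{0}\) controls only the square roots of height differences between consecutive points on the curve. One such difference can straddle many gaps of \(C(A)\), and \(\sum_{j}\sqrt{g_{j}}\) is not bounded by \(\sqrt{\sum_{j}g_{j}}\), so Cauchy--Schwarz cannot reach the gap sum. Nor is the curve of length \(O(D^{2}/r_{0})\) actually produced.

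The paper's Lemma~\ref{lem:gaps} instead orders the gaps by height and attaches a segment to each critical right endpoint. It then uses the cumulative separation \(\operatorname{dist}(S_{i},S_{j})\geq\sqrt{s'/2}\,\sqrt{\varepsilon_{i+1}+\cdots+\varepsilon_{j}}\), which is what lets the packing Lemma~\ref{lem:rsm2}(ii) charge each gap separately. Citing \cite{RZ20} instead gives the right form but a far larger constant.

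A minor slip: in your vanishing argument, the half-plane beyond a separating line can meet \(A^{(r)}\). Use instead the ray \(y+su\), \(s\geq 0\), along which \(d_{A}\) is nondecreasing.
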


\begin{theorem}[no pointwise bound]\label{thm:B}
For every \(\varepsilon \in (0,1/10\rbrack\) there is a connected rectifiable
curve \(\gamma_{\varepsilon} \subset \mathbb{R}^{2}\), contained in a fixed
disc of radius \(4\), of length at most \(12\), and admitting at most \(30\)
increments at scale \(1/2\) in the sense of Section~\ref{sec:combs}, whose
unit parallel set has at least \(1/(2\varepsilon)\) holes, while the area
and the boundary length of that parallel set stay bounded. Consequently no
bound on \(\beta_{1}(A^{(r)})\) at a prescribed radius \(r > 0\) can be a
locally bounded function of the diameter, length and increment count of \(A\)
and of the area and boundary length of \(A^{(r)}\).
\end{theorem}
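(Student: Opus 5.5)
The construction is two combs facing each other. Both share a common horizontal direction. The lower comb consists of a horizontal segment (the spine) at height \(y=0\) of length about \(2\), with \(N \approx 1/\varepsilon\) vertical teeth of height \(h\) at spacing \(2\varepsilon\). The upper comb is its mirror image with spine at height \(y = H\). Its teeth are interleaved, placed at the midpoints between the lower teeth. The two spines are joined at one end by a vertical segment, so that \(\gamma_\varepsilon\) is connected. The parameters are chosen so that, at radius \(r=1\), the tips of opposite teeth overlap in a controlled way. The unit discs around the spines and teeth then merge into a band, except for one small uncovered region between each pair of consecutive interleaved teeth. These regions are the holes.

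To keep the length at most \(12\) while having \(\sim 1/\varepsilon\) teeth, the teeth must be short, of height \(O(\varepsilon)\). The holes are therefore produced not by the teeth themselves but by the tiny perturbations they create on an almost-tangency. The spines are taken at distance \(H\) slightly above \(2\). The unit neighbourhoods of the two spines then just fail to meet, leaving a thin horizontal strip of width \(H-2\) uncovered. Each tooth of height \(\delta\) pushes its unit disc a distance \(\delta\) into this strip. Choosing \(\delta > H-2\) (say \(H-2=\varepsilon^2\), \(\delta=2\varepsilon^2\)), the discs at the tips of the lower and upper teeth cut the strip. A lower tip at \(x_i\) covers the strip only for \(|x-x_i| \lesssim \sqrt{2\delta - 2(H-2)}\sim\varepsilon\). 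I will tune the constants so that every tip covers a full vertical cross-section of the strip near its own abscissa, while a point of the strip midway between two consecutive tips (spacing \(\varepsilon\)) is at distance \(>1\) from all tips and from both spines. The strip is thus cut into \(\sim 1/\varepsilon\) bounded components. Verifying these two distance inequalities is an elementary computation with \(\sqrt{1-t^2}\) expansions. The component at the joined end is also bounded, and the free end can be closed by adding a second vertical connector, so the count is at least \(1/(2\varepsilon)\).

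The remaining quantities are direct to check:
\begin{itemize}
\item \emph{Length:} at most \(2+2+H+N\delta\le 12\).
\item \emph{Containment:} everything lies in a disc of radius \(4\) around the centre.
\item \emph{Area:} the parallel set lies in the unit neighbourhood of a bounded rectangle, so its area is bounded.
\item \emph{Boundary length:} the boundary consists of the outer contour, of bounded length, plus the \(O(1/\varepsilon)\) hole boundaries. Each hole has diameter \(O(\varepsilon)\) and is a region bounded by convex arcs, so its perimeter is \(O(\varepsilon)\). The total is \(O(1)\).
\item \emph{Increments:} the increment count at scale \(1/2\) as defined in Section~\ref{sec:combs} should be bounded because the teeth have size \(O(\varepsilon^2)\ll 1/2\). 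At scale \(1/2\) the curve is indistinguishable from the polygonal rectangle, whose increment count is a small constant, and I will bound it by \(30\).
\end{itemize}

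The consequence follows by rescaling. For a prescribed \(r\), dilate \(\gamma_\varepsilon\) by \(r\). The quantities then stay in a fixed bounded range as \(\varepsilon\to0\), while the hole count diverges. Hence no locally bounded function of these quantities bounds \(\beta_1\).

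The main obstacle is the precise tuning in the hole computation. The uncovered points midway between tips must not be covered by the discs around the teeth's vertical segments, not just their tips. Each hole must also be a genuinely separate component. I would first settle the geometry on one period exactly, then use periodicity.
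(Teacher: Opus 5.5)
Your interleaved-comb design can be made to work, and it differs from the paper's. In the paper the teeth are aligned, and the spine gap $b_{1}-2=2\delta-\varepsilon^{2}/16$ exceeds the tooth length, so no single tooth bridges it. The islands, of half-width about $\varepsilon/4$, are where the two opposite tip discs overlap. Your proposal, however, has two steps that fail as written.

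\emph{The hole count.} Take your numbers: $H-2=\varepsilon^{2}$, $\delta=2\varepsilon^{2}$, consecutive tips $\varepsilon$ apart. Your own estimate then gives islands of half-width $\sqrt{2\delta-2(H-2)}=\sqrt{2}\,\varepsilon$, wider than the spacing. In fact the lower comb alone fills the strip. Every abscissa in the teeth range is within $u\le\varepsilon$ of a lower tip, and that tip's disc reaches height $\delta+\sqrt{1-u^{2}}\ge 2\varepsilon^{2}+1-\varepsilon^{2}=H-1$, where the upper spine's neighbourhood begins. So there are no cells at all.

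Tuning by the single-tip criterion also does not suffice, because between a lower and an upper tip the two discs can jointly cover a column that neither covers alone. Write $\sigma=H-2$, let $s$ be the spacing of consecutive tips, and set $\psi(u)=1-\sqrt{1-u^{2}}$. The column at $x$ is covered by the lower tip at $x_{i}$ up to height $1+\delta-\psi(x-x_{i})$, and by the upper tip at $x'_{j}$ from height $1+\sigma-\delta+\psi(x-x'_{j})$ on. Islands need $\delta\ge\sigma$. A cell survives at the midpoint only if $\delta<\psi(s/2)+\sigma/2$. Together these force $\sigma<2\psi(s/2)\approx s^{2}/4$.

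For example, $s=\varepsilon$, $\sigma=\varepsilon^{2}/8$, $\delta=5\varepsilon^{2}/32$ works. By convexity $\psi(x)+\psi(s-x)\ge 2\psi(s/2)$, so each gap between consecutive tips contains exactly one interval of uncovered abscissae, that is, exactly one cell.

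\emph{The connector.} The unit neighbourhood of the joining segment $\{0\}\times[0,H]$ contains $[0,1]\times[0,H]$. It therefore fills the strip within horizontal distance $1$ of the joined end, so there is no bounded component there. A second connector at the free end of spines of length about $2$ would fill the rest of the strip and destroy every hole.

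Leave the free end open instead. Then at most the last cell meets the unbounded component. Each cell between consecutive tip abscissae $x_{i}<x'_{j}$ beyond the connector's band is enclosed by the rectangle $[x_{i},x'_{j}]\times[0,H]$. Its horizontal sides lie on the spines, and its vertical sides are covered by the spine neighbourhoods and the tip islands. This is the analogue of the paper's $Q_{k}$, and the enclosure has to be argued, not asserted. The cells over $(1,2)$ alone number about $1/\varepsilon-2\ge 1/(2\varepsilon)$. Alternatively, lengthen the spines, or, as the paper does, continue them collinearly and join them far from the teeth.

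The remaining items are fine in outline. For the perimeter, note that the hole boundaries bulge into the holes. What keeps each hole's perimeter $O(\varepsilon)$ is that its boundary consists of two graphs over an interval of length $<s$, with $O(1)$ monotone pieces of oscillation at most $\sigma$. The paper avoids analysing the holes at all, using $\partial(P^{(1)}\cup T^{(1)})\subseteq\partial P^{(1)}\cup(\partial T^{(1)}\smallsetminus P^{(1)})$.
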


\begin{theorem}[scale-uniform bound]\label{thm:C}
For every compact \(A \subseteq \mathbb{R}^{2}\) and every \(s > 0\),
\[\sup_{r \geq s}\ \beta_{1}\left( A^{(r)} \right)\, \leq \, m_{A}(s),\]
where \(m_{A}(s)\) is the number of connected components of the set
\(\mathcal{M}_{A}\) of local maxima of \(d_{A}\) whose height --- constant on
each component, Lemma~\ref{lem:maxima} --- exceeds \(s\).
\end{theorem}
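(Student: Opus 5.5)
Fix \(r \geq s\) and let \(U_{1},\dots,U_{k}\) be distinct bounded components of \(\mathbb{R}^{2}\smallsetminus A^{(r)}\). The plan is to assign to each \(U_{i}\) a component of \(\mathcal{M}_{A}\) of height \(>s\), injectively. Since \(\beta_{1}(A^{(r)})\) counts these components, this gives \(k \leq m_{A}(s)\). Taking the supremum over \(r \geq s\), and arguing the same way for any finite subfamily if \(\beta_{1}\) is infinite, yields the bound of Theorem~\ref{thm:C}.

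\emph{Step 1: locating a maximum in each hole.} Each \(U_{i}\) is a bounded open set. Its boundary lies in \(\partial A^{(r)}\), so \(d_{A} = r\) on \(\partial U_{i}\), while \(d_{A} > r\) inside \(U_{i}\). The function \(d_{A}\) is continuous on the compact set \(\overline{U_{i}}\), so it attains its maximum there, at a point \(x_{i}\in U_{i}\) with \(d_{A}(x_{i}) > r \geq s\). Because \(U_{i}\) is open, \(x_{i}\) is a local maximum of \(d_{A}\) in \(\mathbb{R}^{2}\), hence \(x_{i}\in\mathcal{M}_{A}\). Let \(K_{i}\) be the component of \(\mathcal{M}_{A}\) containing \(x_{i}\). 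By Lemma~\ref{lem:maxima}, \(d_{A}\) is constant on \(K_{i}\), so its height equals \(d_{A}(x_{i}) > s\). Thus \(K_{i}\) is one of the components counted by \(m_{A}(s)\).

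\emph{Step 2: injectivity.} The component \(K_{i}\) is connected, contains \(x_{i}\), and satisfies \(d_{A} = d_{A}(x_{i}) > r\) on all of it. Hence \(K_{i}\subseteq \mathbb{R}^{2}\smallsetminus A^{(r)}\). A connected subset of the complement lies in a single component of the complement, so \(K_{i}\subseteq U_{i}\). Since the \(U_{i}\) are pairwise disjoint, \(i\neq j\) forces \(K_{i}\neq K_{j}\). The assignment \(U_{i}\mapsto K_{i}\) is therefore injective into the set of components of \(\mathcal{M}_{A}\) of height exceeding \(s\), which gives \(k\leq m_{A}(s)\).

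\emph{Main obstacle.} The only delicate input is the constancy of \(d_{A}\) on each component of \(\mathcal{M}_{A}\), which is used in Step 2 to keep \(K_{i}\) inside the open set \(\{d_{A}>r\}\). This is supplied by Lemma~\ref{lem:maxima}, which I take as given. One should also check that "local maximum" in Lemma~\ref{lem:maxima} means the non-strict notion, so that \(x_{i}\) qualifies even when the maximum of \(d_{A}\) on \(\overline{U_{i}}\) is attained on a continuum. If \(m_{A}(s)=\infty\) there is nothing to prove. Otherwise the argument applied to any finite family of holes shows \(\beta_{1}(A^{(r)})\leq m_{A}(s)\) for each \(r\geq s\).
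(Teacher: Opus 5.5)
Your proposal is correct and follows the paper's proof essentially step for step. A maximizer of \(d_{A}\) over the closure of each bounded hole is an interior local maximum of height \(> r \geq s\). Constancy of \(d_{A}\) on its component of \(\mathcal{M}_{A}\) (Lemma~\ref{lem:maxima}) then keeps that component inside the same hole, which gives injectivity.
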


The ingredients of the proof of Theorem~\ref{thm:A} are Fu's theorem; two lemmas
of Rataj, Spodarev and Meschenmoser \cite{RSM09}, which bound the hole count at
a radius \(r\) by a multiple of \((D + 2r)^{2}/(r\sqrt{r\varepsilon})\)
whenever \(\lbrack r,r + \varepsilon)\) is free of critical values, and which
estimate the mutual distances of critical points at different heights; and a
square-root summability estimate for the gaps of the critical-value set
(Lemma~\ref{lem:gaps}), which we prove in full, because the corresponding
statement in \cite[Corollary 4.1]{RSM09} is printed without proof and, as
Remark~\ref{rem:rsmcor} explains, needs its scope restricted and its constant
adjusted. The point of Theorem~\ref{thm:A} is the statement: the singularity
\(\varepsilon^{-1/2}\) that the fixed-radius bound exhibits as \(r\) approaches
a critical value from below is integrable across radii precisely because the
critical gaps are square-root summable, and that exponent is the exponent of
Fu's theorem. The square-root summability of the critical gaps is not new:
Rataj and Zaj\'{\i}\v{c}ek \cite{RZ20} proved it in a sharp form --- sharp in
the embedding sense that every compact null set bounded away from \(0\) with
summable square-root gaps lies inside the critical-value set of some compact
planar set --- and their \cite[Proposition 3.16]{RZ20} gives a stronger,
weighted dyadic estimate with a \(D^{2}\)-dependence and a very large
constant; Lemma~\ref{lem:gaps} is the
special case needed here with the constant made explicit through the segment
lemma of \cite{RSM09}, and Section~\ref{sec:notes} explains why, by the
optimality in \cite{RZ20}, the route through gap sums cannot be sharpened.
To the best of our knowledge the integrated bound on the hole count itself has
not appeared; the closest statement we know is \cite[Proposition 3.5]{RZ20},
an integrated bound on the length of the set of critical points above height
\(r\). The constant and the exponents in Theorem~\ref{thm:A} are not sharp,
and the sharp form --- which we expect to be an area bound --- is stated as an
open problem in Remark~\ref{rem:sharp}.

Theorem~\ref{thm:B} shows that integration over the radius is not an artefact
of the proof. The curve \(\gamma_{\varepsilon}\) consists of two combs facing
each other, with teeth of length \(\varepsilon^{2}/2\) at spacing
\(\varepsilon\), joined by a detour; at radius \(1\) the parallel sets of the
two combs overlap only in narrow islands around the teeth and leave about
\(0.88/\varepsilon\) cells between them (Figure~\ref{fig:two-combs}). Along the
family the diameter, length and increment count of the curve and the area
and boundary length of its unit parallel set stay bounded while the hole count
diverges (Corollary~\ref{cor:combs}), so no expression in these quantities
that is bounded on bounded sets --- in particular no polynomial or monotone
one --- can bound the hole count at a fixed radius. Theorem~\ref{thm:C}
identifies the deterministic quantity
that does control the hole count uniformly in the radius: not the geometry of
\(A\) but the number of components of local maxima of \(d_{A}\) --- centres of
empty discs of radius \(> s\) --- a quantity which, by Theorem~\ref{thm:B},
is not itself controlled by the geometry of \(A\).

\emph{Motivation.} The question arose in the study of the persistent homology
of the planar Wiener sausage \cite{G26}, the parallel set of the trace of a
planar Brownian motion. There, hole counts of parallel sets of random compact
sets are integrated against a weight on the radius axis, and the pathwise
bound of Theorem~\ref{thm:A} --- applied to the trace of the path over a random
time interval, with \(D\) controlled by exponential moments --- supplies all the
integrability that the limit theorems of \cite{G26} require, while
Theorem~\ref{thm:B} explains why the corresponding fixed-radius theory of
\cite[Section 8]{G26} must be probabilistic. Nothing probabilistic enters the
present paper, which is written to be read independently of \cite{G26}.
Finiteness of the expected number of holes of the planar Wiener sausage at a
fixed radius is due to Rataj, Spodarev and Meschenmoser
\cite[Proposition 4.1 and Corollary 4.2]{RSM09}, by the same critical-value
calculus, and the almost-sure regularity of any fixed radius for the Brownian
trace is from \cite[Theorem 4.1]{RSS09}; Honzl \cite{Ho14} studies the mean
number of complementary components of the sausage as a function of the radius;
for the components of the complement of the Brownian trace itself see
Mountford \cite{Mo89}, Le Gall \cite{LG90} and, for their shapes, Holden,
Nacu, Peres and Salisbury \cite{HNPS19}.

\emph{Organization.} Section~\ref{sec:conv} fixes conventions --- \v{C}ech
cohomology and Alexander duality --- and records the Mayer--Vietoris
inequalities we use. Section~\ref{sec:critical} assembles the critical-point
calculus: contact geometry, Fu's theorem, a lemma identifying singular and
\v{C}ech Betti numbers at noncritical radii (for readers who prefer singular
homology), the two lemmas of \cite{RSM09}, and the gap-summability lemma.
Section~\ref{sec:integrated} proves Theorem~\ref{thm:A} and records two
further consequences of the same calculus: star-shapedness of parallel sets
at radii above \(D/\sqrt{3}\), and a bound on the boundary length of a parallel
set by its area, from Kneser's inequality. Section~\ref{sec:maxima} proves
Theorem~\ref{thm:C}. Section~\ref{sec:combs} constructs the two-combs family
and proves Theorem~\ref{thm:B}, the uniformity of the coarse quantities being
Corollary~\ref{cor:combs}. Section~\ref{sec:notes} collects notes and
open questions.

Throughout, \(A,B\) denote nonempty compact subsets of \(\mathbb{R}^{2}\),
\(d_{A}(x): = \min\left\{ \lVert x - a \rVert :a \in A \right\}\) the distance
function, \(A^{(r)}: = \left\{ d_{A} \leq r \right\}\) the closed parallel set,
\(D = \operatorname{diam}A\), and \(\overline{B}(x,r)\) the closed disc of
radius \(r\) centred at \(x\).

\section{Conventions, and the Mayer--Vietoris toolkit}\label{sec:conv}

\begin{remark}[cohomological conventions]\label{rem:conv}
 Betti numbers
\(\beta_{i}\) are ranks of \v{C}ech (equivalently, Alexander--Spanier)
cohomology with coefficients in a fixed field. Two classical facts are
used repeatedly and record the reason for this choice. First, for
compact \(X \subseteq \mathbb{R}^{2} \subset S^{2}\), Alexander duality
gives
\({\check{H}}^{q}(X) \cong {\widetilde{H}}_{1 - q}\left( S^{2} \smallsetminus X \right)\);
taking \(q = 1\) identifies \(\beta_{1}(X)\) with the number of
bounded connected components of
\(\mathbb{R}^{2} \smallsetminus X\), and taking \(q = 2\) gives
\({\check{H}}^{2}(X) = 0\), since
\(S^{2} \smallsetminus X \neq \varnothing\). Second, the Mayer--Vietoris
sequence is exact for pairs of closed subsets of a paracompact
space in \v{C}ech cohomology (see \cite[Chapter 6]{Sp81}); no niceness of the
sets is assumed anywhere in this paper. When \(X\) is compact,
\(\beta_{0}(X) = \dim{\check{H}}^{0}(X)\) is the unreduced number of
connected components, possibly infinite; all inequalities below are read in
\(\lbrack 0,\infty\rbrack\). (Persistence theory usually
works with singular homology and, in degree zero, with reduced homology; the
two degree-zero conventions differ by one on a nonempty set.) In degree one,
where our main results live, Lemma~\ref{lem:bridge} shows that the singular
and \v{C}ech first Betti numbers of a parallel set agree at every noncritical
radius, so a reader who prefers singular homology may read every statement
below about \(\beta_{1}(A^{(r)})\) in that theory at almost every \(r\).
\end{remark}

\begin{lemma}[Mayer--Vietoris inequality]\label{lem:mv}
 Let
\(U,V \subseteq \mathbb{R}^{2}\) be compact with \(U\),
\(V\) and \(U \cup V\) connected. Then

\[\beta_{1}(U \cup V)\, \leq \,\beta_{1}(U) + \beta_{1}(V) + \beta_{0}(U \cap V) - 1.\]
\end{lemma}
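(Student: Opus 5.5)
We apply the Mayer--Vietoris sequence in \v{C}ech cohomology, which by Remark~\ref{rem:conv} is exact for the closed pair \(U,V\) with no regularity assumptions. The relevant portion is
\[
\check H^{0}(U)\oplus\check H^{0}(V)\xrightarrow{\ \alpha\ }\check H^{0}(U\cap V)\xrightarrow{\ \delta\ }\check H^{1}(U\cup V)\xrightarrow{\ j\ }\check H^{1}(U)\oplus\check H^{1}(V).
\]
Exactness at \(\check H^{1}(U\cup V)\) gives \(\dim\check H^{1}(U\cup V)=\dim\operatorname{im}\delta+\dim\operatorname{im} j\). Since \(\operatorname{im} j\) is a subspace of \(\check H^{1}(U)\oplus\check H^{1}(V)\), we have \(\dim\operatorname{im} j\le\beta_{1}(U)+\beta_{1}(V)\). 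It therefore remains to show \(\dim\operatorname{im}\delta\le\beta_{0}(U\cap V)-1\).

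We may assume \(\beta_{0}(U\cap V)<\infty\), since otherwise the right-hand side is \(+\infty\) and there is nothing to prove. We may also assume \(U\cap V\neq\varnothing\). Indeed, if \(U\cap V=\varnothing\), then \(U\cup V\) would be the disjoint union of two nonempty closed sets, contradicting its connectedness (the lemma implicitly takes \(U,V\) nonempty, as is the paper's standing convention).

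With these reductions, \(\operatorname{im}\delta\cong\check H^{0}(U\cap V)/\operatorname{im}\alpha\), so it suffices to show \(\operatorname{im}\alpha\) has dimension at least one. For a compact set, \(\check H^{0}\) is the space of locally constant functions, and \(\alpha(f,g)=f|_{U\cap V}-g|_{U\cap V}\) up to sign convention. Because \(U\) is connected and nonempty, the constant function \(1\in\check H^{0}(U)\) is a nonzero class. Its image \(\alpha(1,0)\) is the constant function \(1\) on \(U\cap V\), which is nonzero because \(U\cap V\neq\varnothing\). Hence \(\dim\operatorname{im}\alpha\ge1\) and \(\dim\operatorname{im}\delta\le\beta_{0}(U\cap V)-1\).

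Combining the two bounds gives the inequality. Connectedness of \(U\) and \(V\) is only partly needed here: the argument used the connectedness of \(U\) and the nonemptiness of the intersection. Connectedness of \(V\) is not used, though it does not hurt.

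The only delicate point is the identification of \(\check H^{0}\) of a compact set with locally constant functions, together with the explicit form of the Mayer--Vietoris map \(\alpha\). This is standard for Alexander--Spanier cohomology (see \cite[Chapter 6]{Sp81}), and we will simply cite it.
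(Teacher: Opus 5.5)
Your proof is correct and follows the paper's route. Both use the same Mayer--Vietoris segment in \v{C}ech cohomology, exactness at \(\check{H}^{1}(U\cup V)\), and the bound \(\operatorname{rank}\partial\le\beta_{0}(U\cap V)-1\). The only difference is how the map into \(\check{H}^{0}(U\cap V)\) is shown to be nonzero. The paper gets rank exactly one from exactness at the two-dimensional space \(\check{H}^{0}(U)\oplus\check{H}^{0}(V)\), which is where connectedness of \(U\), \(V\) and \(U\cup V\) enters. You instead exhibit the constant function, and that argument needs neither the connectedness of \(V\) nor that of \(U\), only \(U\cap V\neq\varnothing\); your closing remark that connectedness of \(U\) is used is therefore inaccurate, though harmless.
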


\begin{proof}
If \(\beta_{0}(U \cap V) = \infty\) there is nothing to
prove; note \(U \cap V \neq \varnothing\) since \(U \cup V\) is
connected. In the exact sequence
\[{\check{H}}^{0}(U \cup V) \rightarrow {\check{H}}^{0}(U) \oplus {\check{H}}^{0}(V) \rightarrow {\check{H}}^{0}(U \cap V)\overset{\ \partial\ }{\rightarrow}{\check{H}}^{1}(U \cup V) \rightarrow {\check{H}}^{1}(U) \oplus {\check{H}}^{1}(V),\]
the first map has rank \(1\) and the middle space has dimension \(2\),
so the map into \({\check{H}}^{0}(U \cap V)\) has rank \(1\) and
\(\operatorname{rank}\partial \leq \beta_{0}(U \cap V) - 1\). Exactness at
\({\check{H}}^{1}(U \cup V)\) gives
\[\beta_{1}(U \cup V) = \operatorname{rank}\partial + \operatorname{rank}\left( {\check{H}}^{1}(U \cup V) \rightarrow {\check{H}}^{1}(U) \oplus {\check{H}}^{1}(V) \right) \leq \left( \beta_{0}(U \cap V) - 1 \right) + \beta_{1}(U) + \beta_{1}(V).\]
\end{proof}

\begin{lemma}[Mayer--Vietoris bound in degree one]\label{lem:signed}
~ Let
\(A,B \subset \mathbb{R}^{2}\) be compact, with Betti numbers read as ranks of
\v{C}ech cohomology as in Remark~\ref{rem:conv}, and assume
\begin{equation}\label{eq:fin}
\beta_{1}(A) < \infty\quad\text{and}\quad \beta_{1}(B) < \infty .
\end{equation}
Then
\[- \beta_{1}(A \cap B)\  \leq \ \beta_{1}(A \cup B) - \beta_{1}(A) - \beta_{1}(B)\  \leq \ \beta_{0}(A \cap B),\]
and in particular
\[\left| \beta_{1}(A \cup B) - \beta_{1}(A) - \beta_{1}(B) \right|\  \leq \ \beta_{1}(A \cap B) + \beta_{0}(A \cap B).\]
\end{lemma}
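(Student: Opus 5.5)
We read off both inequalities from the Mayer--Vietoris sequence in \v{C}ech cohomology for the closed pair \((A,B)\) in \(\mathbb{R}^{2}\), which is exact by Remark~\ref{rem:conv}. Its relevant segment is
\[\check{H}^{0}(A\cap B)\xrightarrow{\ \partial_{0}\ }\check{H}^{1}(A\cup B)\xrightarrow{\ j\ }\check{H}^{1}(A)\oplus\check{H}^{1}(B)\xrightarrow{\ k\ }\check{H}^{1}(A\cap B)\xrightarrow{\ \partial_{1}\ }\check{H}^{2}(A\cup B).\]
Alexander duality gives \(\check{H}^{2}(A\cup B)=0\), since \(A\cup B\) is a compact subset of the plane. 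Hence \(k\) is surjective, and \(\operatorname{rank}k=\beta_{1}(A\cap B)\).

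For the upper bound, exactness at \(\check{H}^{1}(A\cup B)\) gives
\[\beta_{1}(A\cup B)=\operatorname{rank}\partial_{0}+\operatorname{rank}j\le \beta_{0}(A\cap B)+\beta_{1}(A)+\beta_{1}(B),\]
exactly as in Lemma~\ref{lem:mv}. We do not subtract one here, because \(A\cap B\) may be empty and \(A\cup B\) need not be connected. If \(\beta_{0}(A\cap B)=\infty\), the inequality is trivial in \([0,\infty]\). Otherwise the bound shows \(\beta_{1}(A\cup B)<\infty\), so the difference \(\beta_{1}(A\cup B)-\beta_{1}(A)-\beta_{1}(B)\) is a well-defined real number by \eqref{eq:fin}.

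For the lower bound, exactness at \(\check{H}^{1}(A)\oplus\check{H}^{1}(B)\) gives \(\operatorname{rank}j=\dim\ker k\). Since that space has finite dimension \(\beta_{1}(A)+\beta_{1}(B)\) by \eqref{eq:fin}, rank–nullity gives
\[\operatorname{rank}j=\beta_{1}(A)+\beta_{1}(B)-\beta_{1}(A\cap B).\]
Then \(\beta_{1}(A\cup B)\ge\operatorname{rank}j\), which rearranges to \(-\beta_{1}(A\cap B)\le\beta_{1}(A\cup B)-\beta_{1}(A)-\beta_{1}(B)\). Surjectivity of \(k\) also shows \(\beta_{1}(A\cap B)\le\beta_{1}(A)+\beta_{1}(B)<\infty\), so this subtraction is legitimate. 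The absolute-value form follows by combining the two bounds, because both right-hand sides are nonnegative.

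The only delicate point is bookkeeping with infinite ranks. The finiteness hypothesis \eqref{eq:fin} is used exactly once, to make rank–nullity valid for \(k\). The case \(\beta_{1}(A\cup B)=\infty\) can occur only when \(\beta_{0}(A\cap B)=\infty\). In that case the upper bound is trivial in \([0,\infty]\), and the lower bound holds with \(\beta_{1}(A\cup B)=\infty\) on the larger side.
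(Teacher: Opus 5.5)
Your proof is correct and is essentially the paper's argument. Both read the two bounds off the same Mayer--Vietoris segment, using \(\beta_{1}(A\cup B)=\operatorname{rank}\partial_{0}+\operatorname{rank}j\) and rank--nullity for \(k\), with the paper writing the difference directly as \(\operatorname{rank}\partial_{0}-\operatorname{rank}k\); your appeal to \(\check{H}^{2}(A\cup B)=0\) for surjectivity of \(k\) is not needed, since \(\operatorname{rank}k\le\beta_{1}(A\cap B)\) suffices, but it does make your handling of infinite ranks tidier.
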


The finiteness hypothesis \eqref{eq:fin} is not cosmetic: the proof computes
a difference of ranks, and that arithmetic has no meaning when infinite
dimensions occur. For parallel sets it is no restriction in practice: by
Theorem~\ref{thm:main}, \(\beta_{1}(A^{(r)}) < \infty\) for Lebesgue-almost
every \(r > 0\), so \eqref{eq:fin} holds for the pair \(A^{(r)},B^{(r)}\) off
a null set of radii, which is all that an integral over the radius sees.

\begin{proof}
By Remark~\ref{rem:conv} the Mayer--Vietoris sequence
\[{\check{H}}^{1}(A \cap B)\overset{\ \alpha\ }{\leftarrow}{\check{H}}^{1}(A) \oplus {\check{H}}^{1}(B)\overset{\ \beta\ }{\leftarrow}{\check{H}}^{1}(A \cup B)\overset{\ \gamma\ }{\leftarrow}{\check{H}}^{0}(A \cap B)\]
is exact for the pair of closed sets \(A,B\); we argue with the corresponding
maps in the direction of increasing degree and write \(\beta_{i}\) for ranks.
Exactness gives \(\beta_{1}(A \cup B) = \dim\operatorname{im}\beta + \dim\operatorname{im}\gamma\)
and \(\dim\operatorname{im}\beta = \beta_{1}(A) + \beta_{1}(B) - \dim\operatorname{im}\alpha\),
whence
\(\beta_{1}(A \cup B) - \beta_{1}(A) - \beta_{1}(B) = \dim\operatorname{im}\gamma - \dim\operatorname{im}\alpha\).
Now \(\dim\operatorname{im}\alpha \leq \beta_{1}(A \cap B)\) and
\(\dim\operatorname{im}\gamma \leq \beta_{0}(A \cap B)\). \end{proof}

\begin{corollary}[interface bounds]\label{cor:interface}
 For all compact \(U,V \subseteq \mathbb{R}^{2}\),

\[\beta_{1}(U \cap V)\, \leq \,\beta_{1}(U) + \beta_{1}(V),\quad\quad\beta_{0}(U \cap V)\, \leq \,\beta_{0}(U) + \beta_{0}(V) + \beta_{1}(U \cup V).\]
\end{corollary}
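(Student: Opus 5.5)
Both inequalities will come from the Mayer--Vietoris sequence of the closed pair \(U,V\) in \v{C}ech cohomology (Remark~\ref{rem:conv}), read at the appropriate spot. Unlike Lemma~\ref{lem:signed}, no difference of ranks is taken, so no finiteness hypothesis is needed. Everything is a statement of the form ``the dimension of a space is at most the dimension of the space it injects into, plus the dimension of a kernel which is the image of another map'', and this is valid in \([0,\infty]\). If some \(\beta\) on the right-hand side is infinite there is nothing to prove, and the case \(U\cap V=\varnothing\) is covered by the same sequence with \(\check H^{*}(\varnothing)=0\).

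For the first inequality we use the segment
\[\check H^{1}(U)\oplus\check H^{1}(V)\xrightarrow{\ \alpha\ }\check H^{1}(U\cap V)\xrightarrow{\ \delta\ }\check H^{2}(U\cup V).\]
Since \(U\cup V\) is a compact subset of \(\mathbb R^{2}\), Remark~\ref{rem:conv} gives \(\check H^{2}(U\cup V)=0\). Exactness then makes \(\alpha\) surjective, so
\[\beta_{1}(U\cap V)\le\dim\big(\check H^{1}(U)\oplus\check H^{1}(V)\big)=\beta_{1}(U)+\beta_{1}(V).\]

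For the second inequality we use the segment
\[\check H^{0}(U)\oplus\check H^{0}(V)\xrightarrow{\ \rho\ }\check H^{0}(U\cap V)\xrightarrow{\ \partial\ }\check H^{1}(U\cup V).\]
Exactness gives \(\ker\partial=\operatorname{im}\rho\). Hence
\[\beta_{0}(U\cap V)=\dim\operatorname{im}\rho+\dim\operatorname{im}\partial\le\beta_{0}(U)+\beta_{0}(V)+\beta_{1}(U\cup V),\]
where the rank--nullity identity \(\dim W=\dim\ker+\dim\operatorname{im}\) is used in its cardinal form, which holds for possibly infinite-dimensional vector spaces.

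The only point needing care is to make sure no subtraction is ever performed. This is why the bounds are phrased through surjectivity of \(\alpha\) and through the splitting \(\dim W=\dim\ker\partial+\dim\operatorname{im}\partial\), both of which are valid for infinite dimensions. I do not expect any genuine obstacle beyond this, together with quoting the exactness of Mayer--Vietoris for closed sets and the vanishing of \(\check H^{2}\) from Remark~\ref{rem:conv}.
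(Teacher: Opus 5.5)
Your proposal is correct and is essentially the paper's own proof: both inequalities are read off the Mayer--Vietoris segments \(\check H^{1}(U)\oplus\check H^{1}(V)\to\check H^{1}(U\cap V)\to\check H^{2}(U\cup V)=0\) and \(\check H^{0}(U)\oplus\check H^{0}(V)\to\check H^{0}(U\cap V)\to\check H^{1}(U\cup V)\). You also check explicitly that no subtraction of ranks is performed, so the bounds hold in \([0,\infty]\) without the finiteness hypothesis of Lemma~\ref{lem:signed}; the paper leaves this point implicit.
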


\begin{proof}
The segment
\[{\check{H}}^{1}(U) \oplus {\check{H}}^{1}(V) \rightarrow {\check{H}}^{1}(U \cap V) \rightarrow {\check{H}}^{2}(U \cup V)\]
is exact and ends in \(0\) by Remark~\ref{rem:conv}, so the first map is
surjective; this is the first inequality. The segment
\[{\check{H}}^{0}(U) \oplus {\check{H}}^{0}(V) \rightarrow {\check{H}}^{0}(U \cap V) \rightarrow {\check{H}}^{1}(U \cup V)\]
gives the second.
\end{proof}

For compact \(A,B\) and \(r > 0\) the corollary applies to
\(U = A^{(r)}\), \(V = B^{(r)}\), for which \(U \cup V = (A \cup B)^{(r)}\):
the intersection of two parallel sets is not itself a parallel set, but its
Betti numbers are dominated by hole counts of honest parallel sets, to which
Theorem~\ref{thm:main} applies. When the two sets are connected the second
inequality improves by one.

\begin{lemma}[planar intersection complexity for connected sets]\label{lem:connected}
~
Let \(A,B \subset \mathbb{R}^{2}\) be compact connected sets. With Betti
numbers valued in \([0,\infty]\) as in Remark~\ref{rem:conv},
\(\beta_{0}(A \cap B) \leq \beta_{1}(A \cup B) + 1\) if
\(A \cap B \neq \varnothing\), and
\(\beta_{1}(A \cap B) \leq \beta_{1}(A) + \beta_{1}(B)\). Consequently

\[\beta_{1}(A \cap B) + \beta_{0}(A \cap B)\  \leq \ \beta_{1}(A) + \beta_{1}(B) + \beta_{1}(A \cup B) + 1.\]
\end{lemma}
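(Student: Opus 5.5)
The plan is to read off both inequalities from the two halves of Corollary~\ref{cor:interface}, applied with \(U = A\), \(V = B\), and to use connectedness of \(A\) and \(B\) to gain one in the degree-zero estimate. The second inequality, \(\beta_{1}(A \cap B) \leq \beta_{1}(A) + \beta_{1}(B)\), needs no further argument. It is the first inequality of Corollary~\ref{cor:interface}, which comes from the surjectivity of \(\check{H}^{1}(A) \oplus \check{H}^{1}(B) \to \check{H}^{1}(A \cap B)\), guaranteed by \(\check{H}^{2}(A \cup B) = 0\) (Remark~\ref{rem:conv}). It holds in \([0,\infty]\) without any finiteness assumption.

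For the degree-zero bound, assume \(A \cap B \neq \varnothing\), so that \(A \cup B\) is connected. Consider the exact segment
\[\check{H}^{0}(A \cup B) \to \check{H}^{0}(A) \oplus \check{H}^{0}(B) \to \check{H}^{0}(A \cap B) \overset{\partial}{\to} \check{H}^{1}(A \cup B).\]
Since \(A\), \(B\) and \(A \cup B\) are connected, the first map is the diagonal \(k \to k \oplus k\), which has rank \(1\). The middle map therefore has rank exactly \(1\). Exactness at \(\check{H}^{0}(A \cap B)\) then gives
\[\dim \check{H}^{0}(A \cap B) = 1 + \operatorname{rank}\partial \leq 1 + \beta_{1}(A \cup B).\]
If \(\beta_{1}(A \cup B) = \infty\) the claim is trivial. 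Otherwise \(\operatorname{im}\partial\) is finite-dimensional, and the rank identity is valid because the kernel of \(\partial\) is one-dimensional. This is the same computation as in the proof of Lemma~\ref{lem:mv}, run one step earlier in the sequence. It replaces the bound \(\beta_{0}(A)+\beta_{0}(B)+\beta_{1}(A\cup B) = 2 + \beta_{1}(A\cup B)\) of Corollary~\ref{cor:interface} with \(1 + \beta_{1}(A\cup B)\).

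The consequence follows by adding the two bounds. If \(A \cap B = \varnothing\), then \(\beta_{0}(A\cap B) = \beta_{1}(A \cap B) = 0\) and the displayed inequality is trivial. Otherwise the sum gives exactly \(\beta_{1}(A)+\beta_{1}(B)+\beta_{1}(A\cup B)+1\).

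The only point requiring care is dimension arithmetic when some Betti number is infinite. Every step above is either an inequality between a space and a quotient or subspace of it, or is trivially true when the right-hand side is infinite. So no finiteness hypothesis of the kind in \eqref{eq:fin} is needed, and I do not expect a genuine obstacle.
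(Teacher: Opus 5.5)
Your proposal is correct and is essentially the paper's proof. The degree-zero bound comes from the Mayer--Vietoris segment into $\check{H}^{1}(A\cup B)$, where the map into $\check{H}^{0}(A\cap B)$ has rank one exactly as in Lemma~\ref{lem:mv}; the degree-one bound comes from the surjectivity forced by $\check{H}^{2}(A\cup B)=0$; and the display follows by summing, with your explicit handling of $A\cap B=\varnothing$ and of infinite dimensions filling in details the paper leaves implicit.
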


\begin{proof}
Both assertions refine Corollary~\ref{cor:interface}: the segment
\[{\check{H}}^{0}(A) \oplus {\check{H}}^{0}(B) \rightarrow {\check{H}}^{0}(A \cap B) \rightarrow {\check{H}}^{1}(A \cup B)\]
of the Mayer--Vietoris sequence gives
\(\beta_{0}(A \cap B) \leq \beta_{0}(A) + \beta_{0}(B) + \beta_{1}(A \cup B) - 1 = \beta_{1}(A \cup B) + 1\),
the two sets being connected and meeting, so that, as in the proof of
Lemma~\ref{lem:mv}, the map into \(\check{H}^{0}(A \cap B)\) has rank one; and
the segment
\[{\check{H}}^{1}(A) \oplus {\check{H}}^{1}(B) \rightarrow {\check{H}}^{1}(A \cap B) \rightarrow {\check{H}}^{2}(A \cup B) = 0\]
gives the second. Summing the two yields the display. \end{proof}
\section{Critical points of the distance function}\label{sec:critical}

Following \cite{Fu85,RSM09}, for
\(x \in \mathbb{R}^{2} \smallsetminus A\) let
\[\Sigma_{A}(x): = \left\{ a \in A: \lVert x - a \rVert = d_{A}(x) \right\}\]
be the (compact, nonempty) set of nearest points. A point \(x\) is
critical if \(x \in \operatorname{conv}\Sigma_{A}(x)\), and regular
otherwise; \(r > 0\) is a critical value if some critical point
has \(d_{A}(x) = r\), and \(C(A) \subseteq (0,\infty)\) denotes the set
of critical values. (By \cite[Lemma 4.2]{Fu85}, this coincides with
criticality of the locally Lipschitz function \(d_{A}\) in the sense of
Clarke, but we will not need that.)

\begin{lemma}[contact geometry and height bound]\label{lem:contact}
 Let \(x\)
be a critical point with \(d: = d_{A}(x) > 0\), and let
\(V \subseteq S^{1}\) be the set of directions
\((p - x)/d\), \(p \in \Sigma_{A}(x)\). Then:

(i) every closed half-circle of \(S^{1}\) meets
\(V\); equivalently, every open arc disjoint from \(V\) has
length at most \(\pi\);

(ii) there exist \(p,q \in \Sigma_{A}(x)\) whose directions
are separated by an angle \(\geq 2\pi/3\); in particular
\(\lVert p - q \rVert \geq \sqrt{3}\, d\);

(iii) if \(A^{(r)}\) has a bounded complementary component then
\(C(A) \cap (r,\infty) \neq \varnothing\); consequently
\(\beta_{1}\left( A^{(r)} \right) = 0\) for every
\(r \geq \sup C(A)\) (with \(\sup\varnothing: = 0\)), and
\(\sup C(A) \leq D/\sqrt{3}\), so in particular
\(\beta_{1}\left( A^{(r)} \right) = 0\) for every \(r \geq D/\sqrt{3}\).
\end{lemma}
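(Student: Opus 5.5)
Part (i) comes straight from the definition. The point \(x\) is critical, so \(x \in \operatorname{conv}\Sigma_{A}(x)\), and after translating and scaling, \(0 \in \operatorname{conv} V\) with \(V \subseteq S^{1}\) compact. Suppose some closed half-circle \(\{u : \langle u,e\rangle \geq 0\}\) missed \(V\). Then every \(v \in V\) would satisfy \(\langle v,e\rangle < 0\). By compactness these values are bounded above by some \(-\delta < 0\), so \(\langle c,e\rangle \leq -\delta\) for all \(c \in \operatorname{conv}V\), contradicting \(0 \in \operatorname{conv}V\). For the reformulation: an open arc of length \(>\pi\) contains a closed half-circle, so if it were disjoint from \(V\) that half-circle would be too. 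Conversely, if every open arc disjoint from \(V\) has length \(\le\pi\), then a closed half-circle missing \(V\) would, since \(V\) is closed, sit inside an open arc of length \(>\pi\) missing \(V\), which is impossible.

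For (ii) I take the complementary arcs of \(V\) in \(S^{1}\). First, \(V\) has at least two points, since a single point cannot have \(0\) in its convex hull. Choose two points \(u,v \in V\) at maximal angular distance \(\theta \le \pi\), using compactness. If \(\theta \ge 2\pi/3\) we are done. Otherwise consider the antipodes \(-u,-v\). By (i) the closed half-circle centred at \(-u\) meets \(V\), so some \(w \in V\) lies within angle \(\pi/2\) of \(-u\). Hence \(w\) is at angular distance at least \(\pi/2\) from \(u\), and the same argument applies with \(v\).

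A cleaner argument covers all cases at once. Pick any \(v_{1} \in V\). The open arc of directions at angle \(<2\pi/3\) from \(v_{1}\) has length \(4\pi/3\). If \(V\) lay entirely in it, then \(V\) would have to meet both closed half-circles adjacent to that arc, and I will check that this forces a point at angle \(\ge 2\pi/3\), which is a contradiction. Concretely: suppose all pairwise angles in \(V\) are \(<2\pi/3\). Then the directions in \(V\) are three or more points with every gap \(\le \pi\) by (i). The gaps sum to \(2\pi\), and the gap-complement structure forces some pair whose angular distance is at least \(2\pi/3\). I would formalize this by ordering \(V\) cyclically and noting that if every gap is \(\le\pi\) and the largest gap is \(g\), then the two endpoints of the largest gap are at angle \(g\). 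If \(g < 2\pi/3\), choose the point of \(V\) angularly closest to the antipode of one endpoint; it lies within \(\pi - g/2\)… This bookkeeping is the fiddly step. My fallback is the standard fact that three unit vectors with \(0\) in their convex hull, which Carathéodory provides from \(V\), have some pair at angle \(\ge 2\pi/3\), since their three consecutive angles sum to \(2\pi\). Carathéodory gives \(p_{1},p_{2},p_{3} \in V\) (or two antipodal ones) with \(0\) in their hull. Every consecutive angle is then \(\le\pi\), so the largest consecutive angle is at least \(2\pi/3\) and at most \(\pi\), and it equals the angular distance. The chord estimate follows: \(\|p-q\| = 2d\sin(\theta/2) \ge 2d\sin(\pi/3) = \sqrt3\,d\).

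For (iii) I use the isotopy/deformation principle recalled in the introduction: if \([r,\infty)\) contains no critical value, then \(A^{(r)}\) deformation-retracts from \(A^{(R)}\) for large \(R\) along the flow of \(d_{A}\). Alternatively, and more elementarily, let \(U\) be a bounded component of \(\mathbb{R}^{2}\smallsetminus A^{(r)}\). Then \(d_{A}\) is continuous on the compact set \(\overline U\), equals \(r\) on \(\partial U\), and exceeds \(r\) inside, so it attains a maximum at an interior point \(x\) with \(d_{A}(x)>r\). A local maximum of \(d_{A}\) is critical: if \(x \notin \operatorname{conv}\Sigma_{A}(x)\), a separating direction \(e\) satisfies \(\langle x-p,e\rangle>0\) for all \(p\in\Sigma_{A}(x)\), and moving \(x\) along \(e\) strictly increases \(d_{A}\) to first order. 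Here I must control the nearest points of nearby \(x+te\), which lie close to \(\Sigma_{A}(x)\) by compactness and upper semicontinuity. So \(d_{A}(x)\in C(A)\cap(r,\infty)\). The remaining claims follow. The bound \(\sup C(A)\le D/\sqrt3\) comes from (ii), since \(\sqrt3\,d \le \|p-q\| \le D\). Hence \(\beta_{1}(A^{(r)})=0\) for \(r\ge D/\sqrt3\), using Alexander duality from Remark~\ref{rem:conv}. The main obstacle is the first-order increase argument at a regular point, which needs the semicontinuity of \(\Sigma_{A}\).
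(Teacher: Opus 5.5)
Your proof is correct. Parts (i) and (iii) follow the paper's route. Part (i) is the same separation argument. In (iii) you, like the paper, maximize \(d_{A}\) over the closure of a bounded complementary component and show that a local maximum is critical by pushing \(x\) away from its contact directions. You control the competing points of \(A\) through upper semicontinuity of \(\Sigma_{A}\). The paper instead splits \(A\) into points near \(\Sigma_{A}(x)\), handled by expanding \(\lVert x - tu - a\rVert^{2}\), and the remaining points, at distance \(\geq d_{A}(x)+\eta\). This is the same bookkeeping.

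Part (ii) is where you genuinely differ. The paper works with the whole closed set \(V\). It fixes \(v_{0}\in V\) and assumes all separations are \(<2\pi/3\). Then the closed arc of angles \([2\pi/3,4\pi/3]\) is free of \(V\), so it lies in a maximal \(V\)-free open arc \((\alpha,\beta)\), of length \(\leq\pi\) by (i). The endpoints of that arc belong to \(V\) and are separated by at least \(2\pi/3\).

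Your Carath\'eodory reduction extracts at most three points of \(V\) with \(0\) in their hull. This makes the pigeonhole on three cyclic gaps, each at most \(\pi\) and summing to \(2\pi\), immediate. It is more transparent, at the cost of an extra theorem, and it shows \(p,q\) can be taken among the points witnessing criticality.

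Make explicit that ``each gap \(\leq\pi\)'' comes from rerunning the half-circle argument of (i) on the three-point set itself, not from (i) for \(V\). \(V\) may have further points inside those gaps.

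In a final write-up, delete your two earlier attempts at (ii). The antipode argument only yields a separation \(\geq\pi/2\). The gap-sum heuristic does not make sense for infinite \(V\), whose complementary arcs need not have total length \(2\pi\). That is exactly why the Carath\'eodory step, or the paper's single-arc argument, is needed.
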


\begin{proof}
(i) \(V\) is closed. If \(V\) avoided the closed
half-circle \(\left\{ v:\langle v,u\rangle \geq 0 \right\}\), every
\(v \in V\) would have \(\langle v,u\rangle < 0\), hence so would every
convex combination; but criticality says \(0 \in \operatorname{conv}V\) ---
contradiction. An open arc of length \(> \pi\) contains a closed
half-circle in its interior, giving the restatement.

(ii) Suppose all pairs of \(V\) are separated by \(< 2\pi/3\). Fix
\(v_{0} \in V\) at angle \(0\); then \(V\) avoids the closed arc
\(\lbrack 2\pi/3,\, 4\pi/3\rbrack\), which lies in a maximal open
\(V\)-free arc \((\alpha,\beta)\) with, by (i),
\(\beta - \alpha \leq \pi\). Hence
\(\alpha \in \lbrack\pi/3,2\pi/3\rbrack\),
\(\beta \in \lbrack 4\pi/3,5\pi/3\rbrack\), and \(\alpha,\beta \in V\)
since \(V\) is closed; their separation is
\(\beta - \alpha \in \lbrack 2\pi/3,\pi\rbrack\), which as a geodesic
distance equals \(\beta - \alpha \geq 2\pi/3\) --- contradiction. The
chord is \(2d\sin(\pi/3) = \sqrt{3}\, d\).

(iii) Suppose \(A^{(r)}\) has a bounded complementary component \(L\). Then
\(L\) is open, \(\overline{L}\) is compact, \(d_{A} > r\) on \(L\) and
\(d_{A} = r\) on \(\partial L\). Let \(x \in \overline{L}\) maximize \(d_{A}\)
over \(\overline{L}\); then \(d_{A}(x) = r + \varepsilon\) for some
\(\varepsilon > 0\), so \(x \in L\)
is an interior point and hence a local maximum of \(d_{A}\) in
\(\mathbb{R}^{2}\). A local maximum is critical: if the contact directions
\(V\) lay in an open half-circle \(\{ v:\langle v,u\rangle > 0\}\), then for
small \(t > 0\) every point of \(A\) would be farther from \(x - tu\) than from
\(x\) --- the points of \(A\) near \(\Sigma_{A}(x)\) by the inequality
\(\lVert x - tu - a \rVert^{2} = \lVert x - a \rVert^{2} + 2t\langle a - x,u\rangle + t^{2}\)
and continuity of \(a \mapsto \langle a - x,u\rangle/\lVert a - x \rVert\),
the remaining points of \(A\) by compactness, being at distance
\(\geq d_{A}(x) + \eta\) for some \(\eta > 0\) --- so that
\(d_{A}(x - tu) > d_{A}(x)\), contradicting local maximality;
cf.~\cite[proof of Lemma 4.1]{RSM09}. Thus
\(r + \varepsilon \in C(A) \cap (r,\infty)\), which is the first assertion,
and its contrapositive gives \(\beta_{1}\left( A^{(r)} \right) = 0\) for
\(r \geq \sup C(A)\). Finally \(\sqrt{3}\, d \leq \lVert p - q \rVert \leq D\)
for every critical point, by (ii), so \(\sup C(A) \leq D/\sqrt{3}\).
\end{proof}

We next quote three imported results, in our notation. The first is
the deep one.

\begin{theorem}[Fu {\cite[Theorem 4.1]{Fu85}}]\label{thm:fu}
 For every
compact \(A \subseteq \mathbb{R}^{2}\), the set
\(C(A) \cup \{ 0\}\) is compact and
\(\mathbf{H}^{1/2}\left( C(A) \right) = 0\), where
\(\mathbf{H}^{1/2}\) is the one-half-dimensional Hausdorff
measure. In particular \(C(A)\) is Lebesgue-null.
\end{theorem}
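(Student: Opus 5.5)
This statement is Fu's theorem, quoted from \cite[Theorem 4.1]{Fu85}, so in the paper it is imported rather than proved. If I had to reconstruct it, I would treat compactness and the Hausdorff-measure statement separately.

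\emph{Compactness.} By Lemma~\ref{lem:contact}(iii), $C(A)\subseteq(0,D/\sqrt3]$, so only closedness of $C(A)\cup\{0\}$ away from $0$ needs proof. Take critical points $x_n$ with $d_A(x_n)=r_n\to r>0$. The $x_n$ lie in the compact set $A^{(D)}$, so we may pass to a subsequence with $x_n\to x$ and $d_A(x)=r$. The nearest-point map $\Sigma_A$ is upper semicontinuous: limits of points of $\Sigma_A(x_n)$ lie in $\Sigma_A(x)$. Write $x_n=\sum_i\lambda_i^n p_i^n$ with $p_i^n\in\Sigma_A(x_n)$, using at most three points by Carath\'eodory. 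Passing to a further subsequence, the weights and the points converge, which gives $x\in\operatorname{conv}\Sigma_A(x)$. Hence $r\in C(A)$.

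\emph{$\mathbf H^{1/2}(C(A))=0$.} This is a Morse--Sard type statement. The plan is to exploit the fact that $d_A^2$ is semiconcave: $|x|^2-d_A^2(x)$ is convex. Off $A$, and above any level $s>0$, this gives $d_A$ a one-sided second-order control. The key estimate I would aim for is the following. If $x$ is critical and $y$ satisfies $|y-x|\le\delta$, then
\[
|d_A(y)-d_A(x)|\le C\,\delta^2/s .
\]
The upper bound $d_A(y)\le d_A(x)+C\delta^2/s$ follows because the contact directions surround $x$ (Lemma~\ref{lem:contact}(i)). The lower bound comes from semiconcavity along the Clarke generalized gradient, which contains $0$ at $x$.

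Next, cover the compact set of critical points at heights in $[s,D]$. Use the fact that the critical set in the plane is contained in the medial axis, which is a countable union of Lipschitz curves. Choose a cover by $N(\delta)=o(\delta^{-1})$ balls of radius $\delta$. Each ball maps into an interval of length $O(\delta^2)$, so
\[
\sum (\text{diam})^{1/2}\lesssim N(\delta)\,\delta = o(1).
\]
This gives $\mathbf H^{1/2}(C(A)\cap[s,\infty))=0$ for each $s$, and taking a union over $s=1/k$ finishes the proof.

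\emph{Main obstacle.} The hard part is obtaining the $o(\delta^{-1})$ covering rather than $O(\delta^{-2})$. With only Lipschitz-curve structure one gets $O(\delta^{-1})$, which yields finiteness of the measure but not vanishing. To get $o$, I would apply the quadratic estimate only at density points of the one-dimensional critical set. There, the tangential variation of $d_A$ is $o(\delta)$ times $\delta$, which makes the image interval of length $o(\delta^2)$. I would then use a Vitali covering argument, which is essentially Fu's original argument.
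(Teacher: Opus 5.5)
Your compactness argument is correct; the paper gives no proof of this statement and simply imports it from Fu. Your overall plan for the Hausdorff-measure part (quadratic estimate, rectifiability, density points, covering) is also the right one, but that part has a genuine gap, and it starts with the key estimate.

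\emph{The key estimate is false as stated.} The two-sided bound $\lvert d_A(y)-d_A(x)\rvert\le C\delta^2/s$ cannot hold for \emph{every} $y$ with $\lVert y-x\rVert\le\delta$. For $A=\{(-1,0),(1,0)\}$ the origin is critical at height $1$, yet $d_A((t,0))=1-\lvert t\rvert$ drops linearly. Semiconcavity with $0$ in the Clarke gradient gives only the upper bound you correctly derived from Lemma~\ref{lem:contact}(i), namely $d_A(y)^2\le d_A(x)^2+\lVert y-x\rVert^2$; it never gives a quadratic lower bound. What is true, and all you need since only critical points contribute to $C(A)$, is the symmetric statement for two critical points. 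Applying the upper bound at $x$ and then at $y$ gives $\lvert d_A(x)^2-d_A(y)^2\rvert\le\lVert x-y\rVert^2$, hence $\lvert d_A(x)-d_A(y)\rvert\le\lVert x-y\rVert^2/(2s)$ above height $s$. This is Ferry's lemma, the estimate behind Lemma~\ref{lem:rsm2}(i). So ``each ball maps into an interval of length $O(\delta^2)$'' holds only for the critical points in the ball.

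\emph{The step from finite to vanishing measure is asserted, not proved.} Equal-radius covers with $N(\delta)=o(\delta^{-1})$ do not exist in general: for two parallel segments the critical set contains a segment. Lying in countably many Lipschitz curves also gives no bound on $N(\delta)$ at all, since box counts are not countably stable. Finally, ``the tangential variation of $d_A$ is $o(\delta)\cdot\delta$'' comes with no mechanism, and $d_A$ has no tangential derivative on its own singular set.

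\emph{The missing idea is a reduction to one variable followed by chaining.}

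\textbf{Reduction.} Critical points have at least two nearest points. Hence the compact critical set $K_s$ at heights $\ge s$ lies in the non-differentiability set of the convex function $\lVert x\rVert^2-d_A(x)^2$. By Zaj\'{\i}\v{c}ek's theorem this set is covered by countably many Lipschitz curves $\gamma_j:[0,1]\to\mathbb{R}^2$. On the closed set $E_j=\gamma_j^{-1}(K_s)$, the function $g_j=d_A\circ\gamma_j$ satisfies $\lvert g_j(u)-g_j(v)\rvert\le L_j^2\lvert u-v\rvert^2/(2s)$.

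\textbf{Chaining.} Let $u_0$ be a Lebesgue density point of $E_j$ and fix $\eta>0$. For small $h$, every gap of $E_j$ in $[u_0-h,u_0+h]$ is shorter than $\eta h$. So each $u\in E_j$ with $\lvert u-u_0\rvert\le h$ is reached from $u_0$ by a monotone chain in $E_j$ with steps $\ell_i\le 2\eta h$ and $\sum\ell_i\le h$. Since $\sum\ell_i^2\le 2\eta h^2$, summing the quadratic bound along the chain gives $\lvert g_j(u)-g_j(u_0)\rvert\le L_j^2\eta h^2/s$.

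\textbf{Covering.} Stratify the density points by the scale below which this holds. Cover each stratum by short intervals of total length at most $2$. The image of the density points then has $\mathbf{H}^{1/2}$-measure $\lesssim (L_j^2\eta/s)^{1/2}$ for every $\eta>0$, hence zero. The Lebesgue-null set of non-density points has $\mathbf{H}^{1/2}$-null image by the plain quadratic bound.

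This chaining, not a covering-number count, is what produces the $o(\delta^2)$ improvement you need.
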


The following lemma bridges the \v{C}ech convention of
Remark~\ref{rem:conv} and singular homology, in which persistence theory is
usually set up. It is not used in the proofs below; its role is explained in
Remark~\ref{rem:bridge}.

\begin{lemma}[singular and \v{C}ech Betti numbers agree at noncritical radii]\label{lem:bridge}
Let \(A \subset \mathbb{R}^{2}\) be nonempty and compact, and let
\(r \in (0,\infty) \smallsetminus C(A)\). Then \(A^{(r)}\) is a compact
absolute neighbourhood retract, and for every \(q \geq 0\) and every field
\(\mathbb{F}\),
\[\operatorname{rank}H_{q}\left( A^{(r)};\mathbb{F} \right)\  = \ \dim{\check{H}}^{q}\left( A^{(r)};\mathbb{F} \right).\]
In particular the two conventions assign the same value to
\(\beta_{1}\left( A^{(r)} \right)\), and by Alexander duality that common value
is the number of bounded connected components of
\(\mathbb{R}^{2} \smallsetminus A^{(r)}\). By Theorem~\ref{thm:fu} this holds
for Lebesgue-almost every \(r > 0\).
\end{lemma}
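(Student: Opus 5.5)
The plan is to show that at a noncritical radius the parallel set \(A^{(r)}\) is a topological manifold with boundary, in fact a compact \(2\)-dimensional submanifold of \(\mathbb{R}^{2}\) with Lipschitz boundary, and then to invoke the classical fact that compact ENRs, hence compact ANRs in \(\mathbb{R}^{n}\), have singular homology agreeing with \v{C}ech cohomology. The key analytic input is the noncritical gradient flow of Grove and Fu. Since \(r \notin C(A)\) and \(C(A) \cup \{0\}\) is compact (Theorem~\ref{thm:fu}), some interval \([r-\delta, r+\delta]\) is free of critical values. On the compact set \(\{r-\delta \leq d_{A} \leq r+\delta\}\) every point \(x\) is regular, i.e.\ \(x \notin \operatorname{conv}\Sigma_{A}(x)\).

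First, for a regular point \(x\) we take the unit vector \(u(x)\) pointing from \(x\) away from \(\operatorname{conv}\Sigma_{A}(x)\), along the direction of the nearest point of that convex set to \(x\) reversed. Upper semicontinuity of \(x \mapsto \Sigma_{A}(x)\) and compactness then give a uniform \(c>0\) and a smooth vector field \(X\), built by a partition of unity, with directional derivative of \(d_{A}\) along \(X\) at least \(c\) on the collar. This is the standard construction in Grove's isotopy lemma. Flowing along \(X\) shows that the level set \(\{d_{A}=r\}\) is a compact \(1\)-manifold: near each point it is the graph of a Lipschitz function over a line transverse to \(X\), because \(d_{A}\) is strictly monotone along flow lines. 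The flow also gives a collar \(\{d_{A}=r\}\times[r-\delta,r]\cong\{r-\delta\leq d_{A}\leq r\}\). Consequently \(A^{(r)}\) is a compact topological \(2\)-manifold with boundary embedded in \(\mathbb{R}^{2}\), and is locally either a half-disc or a disc near boundary points.

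Second, a compact topological manifold with boundary is an ANR. More elementarily, a compact subset of \(\mathbb{R}^{2}\) that is locally contractible is an ENR (Borsuk). Alternatively, the flow retracts the open neighbourhood \(\{d_{A}<r+\delta\}\) onto \(A^{(r)}\): one pushes points with \(r<d_{A}<r+\delta\) back to level \(r\) along the flow of \(-X\), and the push is continuous by transversality. This exhibits \(A^{(r)}\) as a retract of an open subset of \(\mathbb{R}^{2}\), hence an ENR/ANR directly. For compact ANRs, singular cohomology and \v{C}ech cohomology agree (Spanier, Chapter 6). Over a field, with finitely generated homology, which holds for compact ANRs (homotopy-dominated by finite complexes), singular homology and singular cohomology have equal ranks by universal coefficients. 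The remaining assertions follow: the Alexander duality identification is Remark~\ref{rem:conv}, and the almost-every statement follows from Theorem~\ref{thm:fu}, since \(C(A)\) is Lebesgue-null.

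The main obstacle is making the retraction continuous. One has to show that the hitting time of level \(r\) along the backward flow depends continuously on the starting point. This follows from the uniform lower bound \(c\) on the derivative of \(d_{A}\) along \(X\), via the mean value inequality for Lipschitz functions along curves, together with continuity of the flow. I would argue this through Clarke's generalized gradient, whose minimal-norm element is bounded below by the distance from \(x\) to \(\operatorname{conv}\Sigma_{A}(x)\) divided by \(d_{A}(x)\), taking care that this quantity is lower semicontinuous so that the bound \(c\) is uniform on the compact collar.
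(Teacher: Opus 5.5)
Your proof is correct, but it gets the manifold/ANR structure by a different mechanism than the paper.

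\emph{The paper's route.} It works pointwise on the level \(r\). First, \(A^{(r)}\) is the closure of \(\{d_A<r\}\). Second, at a noncritical radius \(\partial A^{(r)}=\{d_A=r\}\), because a level-\(r\) point interior to \(A^{(r)}\) would be a local maximum, hence critical. Third, \(\{d_A=r\}\) is a Lipschitz \(1\)-manifold by Clarke's implicit function theorem, quoted in Fu's formulation. A compact planar set that is the closure of its interior and has Lipschitz boundary is then a topological \(2\)-manifold with boundary, hence an ANR. The \v{C}ech/singular comparison and universal coefficients finish the proof exactly as yours does.

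\emph{Your route.} You pass to a critical-value-free collar and build Grove's gradient-like field. The uniform lower bound on \(d_A'(\,\cdot\,;X)\) then gives the local Lipschitz-graph description of the level set; in effect this is a hands-on proof of the Lipschitz implicit function theorem for \(d_A\). It also makes the identification \(\partial A^{(r)}=\{d_A=r\}\) automatic, with no local-maximum argument needed. Your second variant, the backward-flow retraction of \(\{d_A<r+\delta\}\) onto \(A^{(r)}\), is the most economical part. It proves the ENR property directly and bypasses both the manifold-with-boundary step and the citation that compact manifolds with boundary are ANRs.

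\emph{What each buys.} The paper's route is shorter and needs regularity only at the points of level \(r\) itself. Yours is more self-contained and yields more. The same flow exhibits \(A^{(r)}\) as a strong deformation retract of \(A^{(r')}\) for \(r<r'<r+\delta\) (the isotopy lemma). So it also shows the offset filtration is homotopically constant near a noncritical radius.

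Two small simplifications are available.
\begin{enumerate}
\item The uniform constant \(c\) already comes from the finite subcover in the partition-of-unity construction, because \(v\mapsto d_A'(y;v)=\min_{a\in\Sigma_A(y)}\langle (y-a)/d_A(y),v\rangle\) is concave and positively homogeneous. The closing detour through lower semicontinuity of the minimal-norm Clarke gradient is therefore unnecessary.
\item The collar exists because \(C(A)\) is closed in \((0,\infty)\). This follows from upper semicontinuity of \(\Sigma_A\) and Carath\'eodory's theorem, without appealing to Fu's theorem.
\end{enumerate}
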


\begin{proof}
For \(r > 0\) the set \(A^{(r)} = \left\{ d_{A} \leq r \right\}\) is the closure
of the open set \(\left\{ d_{A} < r \right\}\): if \(d_{A}(x) = r\) and
\(p \in A\) is a nearest point, the points of the segment \(\lbrack p,x\rbrack\)
near \(x\) have \(d_{A} < r\). For \(r \notin C(A)\) one has
\(\partial A^{(r)} = \left\{ d_{A} = r \right\}\), since a point with
\(d_{A} = r\) interior to \(A^{(r)}\) would be a local maximum of \(d_{A}\),
hence critical; and the level set \(\left\{ d_{A} = r \right\}\) of the locally
Lipschitz function \(d_{A}\) at a noncritical value is a Lipschitz
\(1\)-manifold by Clarke's implicit function theorem, \cite[Theorem 3.1]{Fu85}
(see \cite[eq.~(21)]{RZ20} for this formulation). This refines Ferry's theorem
\cite{Fer76}, which gives the conclusion for almost every \(r\) in
\(\mathbb{R}^{2}\) and \(\mathbb{R}^{3}\); Ferry also exhibits a compact set in
\(\mathbb{R}^{4}\) for which it fails, which is one reason the present lemma is
stated in the plane. A compact planar set that is
the closure of its interior and has Lipschitz boundary is therefore a compact
topological \(2\)-manifold with boundary: near an interior point it is locally
Euclidean, and near a boundary point it is, in suitable coordinates, the region
below the graph of a Lipschitz function, which is homeomorphic to a half-disc.
Compact topological manifolds with boundary are absolute neighbourhood
retracts, hence locally contractible, and for a compact locally contractible
space \v{C}ech and singular theories agree; over a field, ranks of homology and
dimensions of cohomology then coincide by universal coefficients. The final
assertion is Alexander duality as recalled in Remark~\ref{rem:conv}.
\end{proof}

\begin{remark}[what the bridge is and is not used for]\label{rem:bridge}
Lemma~\ref{lem:bridge} is not needed anywhere in this paper, which is
written entirely in \v{C}ech terms, where Mayer--Vietoris for closed pairs and
Alexander duality are available for arbitrary compacta and no regularity is
assumed. Its role is to make the objects of persistence theory --- the
persistence module of the offset filtration, its diagram, and the Betti curve
\(r \mapsto \operatorname{rank}H_{1}(A^{(r)};\mathbb{F})\) --- the same
objects as those appearing in the estimates here, so that a radius integral of
the Betti curve, such as the smoothed functionals of \cite{G26}, may be read
either way; because such integrals are insensitive to null sets of radii,
agreement at almost every radius suffices. A reader who prefers to define the
persistence theory itself in \v{C}ech terms may dispense with the lemma
altogether.
\end{remark}

\begin{lemma}[Rataj--Spodarev--Meschenmoser {\cite[Lemma 4.1]{RSM09}}]\label{lem:rsm1}
 Let \(A \subseteq \mathbb{R}^{2}\) be compact and
let \(0 < \varepsilon < r\) satisfy
\(C(A) \cap \lbrack r,\, r + \varepsilon) = \varnothing\). Then
the number of bounded connected components of
\(\mathbb{R}^{2} \smallsetminus A^{(r)}\) is at most
\(\pi(D + 2r)^{2}/\left( r\sqrt{2r\varepsilon} \right)\).
\end{lemma}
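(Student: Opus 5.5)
The plan is an area-packing argument: give each hole a region of area at least a fixed multiple of \(r\sqrt{2r\varepsilon}\), show these regions are disjoint, and confine all of them to a disc of area \(\pi(D+2r)^{2}\). Dividing the second area by the first gives the bound. Everything reduces to the lower area bound per hole, which I expect to be the main obstacle.

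\emph{Each hole carries a point of height at least \(r+\varepsilon\).} Let \(L\) be a bounded component of \(\mathbb{R}^{2}\smallsetminus A^{(r)}\). As in the proof of Lemma~\ref{lem:contact}(iii), a maximizer \(x_{L}\) of \(d_{A}\) over \(\overline{L}\) is an interior local maximum, hence critical. Its height \(R_{L}:=d_{A}(x_{L})\) is therefore a critical value exceeding \(r\). Since \(C(A)\cap[r,r+\varepsilon)=\varnothing\), we get \(R_{L}\geq r+\varepsilon\). The open disc \(B(x_{L},R_{L})\) contains no point of \(A\), and by Lemma~\ref{lem:contact}(ii) there are contact points \(p,q\in\Sigma_{A}(x_{L})\) with \(\lVert p-q\rVert\geq\sqrt{3}R_{L}\).

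\emph{Confinement.} Every hole lies in the bounded components of the complement of \(A^{(r)}\). I will argue that every hole, together with the region attached to it below, lies in a disc of radius \(D/2+r\) about a suitable centre (for instance the Jung centre, or simply a point of \(A\), at the cost of the factor \(2\)). That disc has area at most \(\pi(D+2r)^{2}\), which is the numerator of the bound.

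\emph{Area lower bound (main obstacle).} I will exhibit inside \(L\) a region of area \(\gtrsim r\sqrt{2r\varepsilon}\). Fix a contact point \(p\) and let \(u=(x_{L}-p)/R_{L}\).
\begin{itemize}
\item Along the ray from \(p\), the point \(y=p+tu\) has \(d_{A}(y)=t\) for \(0\leq t\leq R_{L}\). This holds because \(B(y,t)\subseteq B(x_{L},R_{L})\) contains no point of \(A\) and \(p\) is at distance \(t\). So the segment \(t\in(r,R_{L}]\), of length at least \(\varepsilon\), lies in \(L\).
\item To thicken this segment, note that the disc of radius \(r\) about \(p\) is tangent inside \(\overline{B}(x_{L},R_{L})\) only at \(p\), and \(A\) stays outside the open disc of radius \(R_{L}\). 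A point \(y\) near \(x_{L}\) then lies in \(L\) as soon as \(\overline{B}(y,r)\) fits strictly inside \(B(x_{L},R_{L})\) up to the contact region.
\end{itemize}

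The scale \(\sqrt{2r\varepsilon}\) is the half-chord of a circle of radius \(r\) cut at sagitta \(\varepsilon\). This is the expected width of the lens between the circle of radius \(R_{L}\geq r+\varepsilon\) and an internally tangent circle of radius \(r\). Sweeping this width along an arc of length of order \(r\) gives area of order \(r\sqrt{2r\varepsilon}\).

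If the direct lens computation fails, my first fallback is the coarea formula, using \(\lvert\nabla d_{A}\rvert=1\) a.e. off \(A\). It gives \(\lvert L\cap\{r<d_{A}<r+\varepsilon\}\rvert=\int_{r}^{r+\varepsilon}\mathcal{H}^{1}(\{d_{A}=s\}\cap L)\,ds\). I would then bound the level lengths from below via the absence of critical values on \([r,r+\varepsilon)\). The isotopy lemma keeps each hole alive and nondegenerate throughout that interval, and its boundary consists of circular arcs of radius \(s\) that are concave toward the hole.

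Once the per-hole bound is in place, disjointness of the holes and the confinement step give the stated inequality directly.
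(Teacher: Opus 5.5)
The paper does not prove this lemma; it quotes it from \cite[Lemma 4.1]{RSM09}. So I judge your plan on its own terms, and it has a genuine gap. Your first step is right: the maximizer \(x_L\) of \(d_A\) over \(\overline{L}\) is critical, so \(R_L\ge r+\varepsilon\). But the step you flag as the main obstacle is not merely hard, it is false. No region of area \(\gtrsim r\sqrt{2r\varepsilon}\) need exist inside \(L\).

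Take \(A=\partial B(0,r+\varepsilon)\) with \(0<\varepsilon<r\). Its only critical point is the centre, so \(C(A)=\{r+\varepsilon\}\) and the hypothesis holds. Then \(A^{(r)}\) is the annulus \(\varepsilon\le\lVert y\rVert\le 2r+\varepsilon\), whose only hole is \(B(0,\varepsilon)\), of area \(\pi\varepsilon^{2}\). Since \(\pi\varepsilon^{2}/(r\sqrt{2r\varepsilon})=(\pi/\sqrt{2})(\varepsilon/r)^{3/2}\to 0\), no constant can work.

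The lens picture is where the reasoning breaks. The points of \(\overline{B}(x_L,R_L)\) at distance \(>r\) from the circle \(\partial B(x_L,R_L)\) form exactly \(B(x_L,R_L-r)\). When the whole circle lies in \(A\), as in the example, nothing else of the disc belongs to \(L\), in particular none of your lens outside that small disc. The coarea fallback measures \(L\cap\{r<d_A<r+\varepsilon\}\subseteq L\) and hits the same obstruction. A minor point: Jung's theorem gives radius \(D/\sqrt{3}\), not \(D/2\); centring the confining disc at a point of \(A\) is what works.

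The missing idea is that the region charged to a hole must lie largely in \(A^{(r)}\), outside the hole. Disjointness is then no longer automatic; it becomes the real content and needs a separation estimate between \emph{different} holes. That separation, not a width inside one hole, is where \(\sqrt{2r\varepsilon}\) enters. Here is one way to supply it.

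Let \(x,x'\) be the maximizers in two distinct holes, with heights \(h,h'\ge r+\varepsilon\), and set \(d=\lVert x-x'\rVert\). Pick \(p\in\Sigma_A(x)\). For \(z_\lambda=(1-\lambda)x+\lambda x'\) and \(a\in A\) one has the identity \(\lVert z_\lambda-a\rVert^{2}=(1-\lambda)\lVert x-a\rVert^{2}+\lambda\lVert x'-a\rVert^{2}-\lambda(1-\lambda)d^{2}\). This gives a quadratic lower bound for \(d_A(z_\lambda)^{2}\).

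Since \([x,x']\) must meet \(\{d_A\le r\}\) while both endpoints have height at least \(r+\varepsilon\), minimizing that quadratic over \(\lambda\) yields \((d^{2}+h^{2}-h'^{2})/(2d)\ge\sqrt{2r\varepsilon+\varepsilon^{2}}\). Equivalently, \(B(x,\sqrt{2r\varepsilon})\) lies strictly on \(x\)'s side of the power bisector \(\{\lVert y-x\rVert^{2}-h^{2}=\lVert y-x'\rVert^{2}-h'^{2}\}\). The point \(p\) lies weakly on that side too, since its power is \(0\) for \(B(x,h)\) and \(\ge 0\) for the \(A\)-free disc \(B(x',h')\).

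Hence the cones \(\operatorname{conv}(\{p\}\cup B(x,\sqrt{2r\varepsilon}))\), one per hole, have pairwise disjoint interiors. Each contains a triangle of base \(2\sqrt{2r\varepsilon}\) and height \(h\ge r\), so has area at least \(r\sqrt{2r\varepsilon}\). Holes lie in \(\operatorname{conv}A\), because off it \(d_A\) is nondecreasing along a ray to infinity. So all cones lie in \(\overline{B}(a_0,D+\sqrt{2r\varepsilon})\subseteq\overline{B}(a_0,D+2r)\) for any \(a_0\in A\), which gives exactly the stated bound.
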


\begin{lemma}[Rataj--Spodarev--Meschenmoser {\cite[Lemmas 4.2 and 4.3]{RSM09}}]\label{lem:rsm2}
 (i) If \(x,y\) are critical points of \(d_{A}\)
at heights \(0 < h < h'\), with
\(p \in \Sigma_{A}(x)\), \(q \in \Sigma_{A}(y)\), then the
segments \(S = \left\lbrack x,(x + p)/2 \right\rbrack\) and
\(T = \left\lbrack y,(y + q)/2 \right\rbrack\) satisfy
\(\operatorname{dist}(S,T) \geq \sqrt{h/2}\,\sqrt{h' - h}\).

(ii) Let \(S_{0},\ldots,S_{n}\) be segments of length
\(\geq s' > 0\) contained in a ball \(B_{R}\),
\(R \geq s'\), and
\(\varepsilon_{1},\ldots,\varepsilon_{n} > 0\) satisfy
\[\operatorname{dist}\left( S_{i},S_{j} \right) \geq \sqrt{s'/2}\,\sqrt{\varepsilon_{i + 1} + \cdots + \varepsilon_{j}}\]
for all \(0 \leq i < j \leq n\). Then
\(\sum_{i = 1}^{n}\sqrt{\varepsilon_{i}} \leq 8(s')^{- 3/2}\pi R^{2}\).
\end{lemma}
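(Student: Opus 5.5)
For (i), I would compare the two empty discs directly. Since \(x\) is critical at height \(h\), the open disc \(B(x,h)\) contains no point of \(A\). By Lemma~\ref{lem:contact}(i), the contact set \(\Sigma_{A}(x)\) meets every closed half-circle of \(\partial B(x,h)\). The open disc \(B(y,h')\) is also free of \(A\), so it contains no point of \(\Sigma_{A}(x)\).

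Now take \(u = x + t(p-x) \in S\) and \(w = y + s(q-y) \in T\), with \(t,s \in [0,1/2]\). Then
\[\overline{B}\bigl(u,(1-t)h\bigr) \subseteq \overline{B}(x,h), \qquad \overline{B}\bigl(w,(1-s)h'\bigr) \subseteq \overline{B}(y,h'),\]
and the radii satisfy \((1-t)h \ge h/2\) and \((1-s)h' \ge h'/2\). I would pick a contact point \(a \in \Sigma_{A}(x)\) lying in the closed half-plane that faces \(w\), that is, with \(\langle a - x, w - x\rangle \ge 0\). This point lies outside \(B(y,h')\), and hence outside \(B(w,(1-s)h')\).

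The aim is an elementary lens computation with the law of cosines. It should show that a disc of radius \(\ge h'/2\), which avoids a point on the sphere \(\partial B(x,h)\) in the half facing it, cannot come within \(\sqrt{h/2}\,\sqrt{h'-h}\) of the inner half-radius segment \(S\).

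The point model to match is the following. For the centres themselves,
\[\|y-a\|^{2} \le h^{2} + \|y-x\|^{2} \quad\text{and}\quad \|y-a\| \ge h'\]
give \(\|y-x\|^{2} \ge h'^{2}-h^{2} \ge 2h(h'-h)\). The segment version loses a factor \(4\) because the radii are halved, which gives the stated \(\tfrac{h}{2}(h'-h)\). I expect bookkeeping the worst position of \(u\) along \(S\) to be the only fiddly part.

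For (ii), I would use a packing argument. Put \(c = \sqrt{s'/2}\) and surround each \(S_{i}\), \(i \ge 1\), by its closed \(\rho_{i}\)-neighbourhood \(N_{i}\), where \(\rho_{i} = c\sqrt{\varepsilon_{i}}/2\). Each \(N_{i}\) contains a rectangle of area \(2\rho_{i}s'\). Since every distance between the segments is at most \(2R\), we have \(c\sqrt{\varepsilon_{i}} \le 2R\), so \(\rho_{i} \le R\). Hence all the \(N_{i}\) lie in \(B_{2R}\).

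If the \(N_{i}\) were pairwise disjoint, we would get
\[\sum_{i} 2\rho_{i}s' \le 4\pi R^{2},\]
that is, \(\sum_{i}\sqrt{\varepsilon_{i}} \le 4\sqrt{2}\,\pi R^{2}(s')^{-3/2}\). This is below the claimed \(8\pi R^{2}(s')^{-3/2}\), so there is slack of a factor about \(\sqrt{2}\) to absorb the defect in disjointness.

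The main obstacle is exactly that disjointness, which is an index-bookkeeping issue. For \(i<j\) the hypothesis gives
\[\operatorname{dist}(S_{i},S_{j}) \ge c\sqrt{\varepsilon_{i+1}+\cdots+\varepsilon_{j}} \ge c\sqrt{\varepsilon_{j}} = 2\rho_{j}.\]
This suffices when \(\rho_{i} \le \rho_{j}\), but the sum does not contain \(\varepsilon_{i}\) itself. I would first try a greedy or extraction argument. When \(\rho_{i} > \rho_{j}\), use \(\operatorname{dist}(S_{i-1},S_{j}) \ge c\sqrt{\varepsilon_{i}}\) and charge \(\varepsilon_{i}\) to the segment \(S_{i-1}\) instead. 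Alternatively, split the indices by parity, or use
\[\sqrt{\varepsilon_{i+1}+\varepsilon_{j}} \ge \bigl(\sqrt{\varepsilon_{i+1}}+\sqrt{\varepsilon_{j}}\bigr)/\sqrt{2}\]
with radii \(c\min(\sqrt{\varepsilon_{i}},\sqrt{\varepsilon_{i+1}})/(2\sqrt{2})\), and recover the lost terms through the spare constant. Any such scheme that costs at most the factor \(\sqrt{2}\) of slack yields the stated bound.
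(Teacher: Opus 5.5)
The paper does not prove this lemma. Both parts are imported from \cite{RSM09}, with part (i) resting on Ferry's lemma. Your proposal therefore has to stand on its own, and in each part the step you postpone is the actual content.

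In (i) the reduction throws away the information that matters. Once you replace the empty disc $B(y,h')$ by $B\bigl(w,(1-s)h'\bigr)$ and keep only one contact point $a$ facing $w$, the law of cosines gives just $\|w-x\|^{2}\ge(1-s)^{2}h'^{2}-h^{2}$. This is vacuous at $s=1/2$ whenever $h'\le 2h$, which includes the regime $h'\downarrow h$ where the lemma has content.

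The intermediate claim you formulate is in fact false. For $h'<2h$, every disc of radius $h'/2$ centred at a point of $S$ within distance $h-h'/2$ of $x$ avoids the whole circle $\partial B(x,h)$, yet its centre lies on $S$.

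The ``factor $4$ from halving the radii'' is exact only when $p=q$, since then $(x+p)/2-(y+q)/2=(x-y)/2$. That case is essentially extremal. Take orthonormal $e_{1},e_{2}$, set $y=x+\sqrt{h'^{2}-h^{2}}\,e_{2}$ and $A=\{x\pm he_{1},\,y+h'e_{2}\}$, and choose $p=q=x+he_{1}$. Then both $x$ and $y$ are critical, and
\[
\operatorname{dist}(S,T)^{2}=\tfrac14(h'^{2}-h^{2})=\tfrac h2(h'-h)\bigl(1+\tfrac{h'-h}{2h}\bigr).
\]
So the constant is sharp to first order, and the ``bookkeeping of the worst position'' is the proof itself. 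A correct argument must keep $w$ on a radius of $B(y,h')$. It must also use $\|y-\sigma\|\ge h'$ for the contact points $\sigma$ of $x$, which is where criticality enters, as in Ferry's inequality $\|x-y\|^{2}\ge h'^{2}-h^{2}$. Finally it needs $q\notin B(x,h)$. None of these survives your reduction.

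In (ii) you locate the obstacle correctly but do not overcome it, and the remedies you list fail as stated.

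\emph{Minimum radii.} Radii $c\min(\sqrt{\varepsilon_{i}},\sqrt{\varepsilon_{i+1}})/(2\sqrt2)$ do give disjoint neighbourhoods. But they only bound $\sum_{i}\min(\sqrt{\varepsilon_{i}},\sqrt{\varepsilon_{i+1}})$. For alternating gaps $E,\eta,E,\eta,\dots$ this sum is about $n\sqrt\eta$, against $\sum_{i}\sqrt{\varepsilon_{i}}\approx\tfrac n2\sqrt E$. No constant factor recovers that loss, in particular not your spare $\sqrt2$.

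\emph{Attaching to one endpoint.} Placing, for each gap, one neighbourhood of radius proportional to $\sqrt{\varepsilon_{i}}$ around $S_{i-1}$ or $S_{i}$ fails as well. Take gaps $\eta,E,\eta$ with $\operatorname{dist}(S_{0},S_{1})=\operatorname{dist}(S_{2},S_{3})=c\sqrt\eta$. The neighbourhood of the middle gap, whether around $S_{1}$ or around $S_{2}$, contains $S_{0}$ or $S_{3}$. It therefore meets the neighbourhood of a small gap.

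\emph{Parity split.} Splitting the indices by parity fails in the same way on the gap sequence $\eta,\eta,E,\eta,\eta$.

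So some gaps would have to be discarded, and your sketch does not control what is discarded. What is missing is a genuinely different device. For example, a large gap could be assigned a region surrounding the whole cluster of segments on one side of it, rather than a neighbourhood of a single segment, with a lower bound on the area of that region. As it stands, neither part is proved.
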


Both parts are proved in full in \cite{RSM09} (part (i) building on
Ferry's lemma \cite{Fer76}, reproved as \cite[Lemma 4.3]{Fu85}). From these we now
derive the gap-summability estimate in the form we need. Its content ---
that the gaps of \(C(A) \cap \lbrack s,\infty)\) have summable square roots
--- is a theorem of Rataj and Zaj\'{\i}\v{c}ek \cite[Theorem 1.1]{RZ20}, who
also show that this smallness is optimal; what the lemma adds is an explicit
constant, obtained from Lemma~\ref{lem:rsm2} rather than from the DC-function
methods of \cite{RZ20}.

\begin{lemma}[square-root summability of critical gaps]\label{lem:gaps}

Let \(A \subseteq \mathbb{R}^{2}\) be compact with
\(D > 0\), and \(0 < s \leq D\). Let
\(\left\{ I_{i} = \left( a_{i},b_{i} \right) \right\}_{i}\) be the
maximal open intervals of \((s,\infty) \smallsetminus C(A)\) that
are bounded with \(b_{i} \in C(A)\), and
\(\varepsilon_{i} = b_{i} - a_{i}\). Then

\[\sum_{i}^{}\sqrt{\varepsilon_{i}}\, \leq \, 32\pi\, s^{- 3/2}D^{2}.\]
\end{lemma}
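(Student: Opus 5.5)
The plan is to turn the gaps into a family of segments satisfying the hypotheses of Lemma~\ref{lem:rsm2}(ii), using Lemma~\ref{lem:rsm2}(i) for the separation condition. Since every summand is nonnegative, it suffices to bound \(\sum_{i\in F}\sqrt{\varepsilon_{i}}\) for every finite subfamily \(F\) of gaps by the stated constant. Fix such an \(F\) and list its gaps in increasing order, \(b_{1}<b_{2}<\cdots<b_{n}\); the gaps are disjoint, so this also orders them left to right.

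\textbf{The segments.} By hypothesis each \(b_{i}\in C(A)\), so I choose a critical point \(x_{i}\) with \(d_{A}(x_{i})=b_{i}\), a nearest point \(p_{i}\in\Sigma_{A}(x_{i})\), and set \(S_{i}=[x_{i},(x_{i}+p_{i})/2]\). This segment has length \(b_{i}/2\geq s/2\). Since \(x_{i}\in\operatorname{conv}\Sigma_{A}(x_{i})\subseteq\operatorname{conv}A\) and \(p_{i}\in A\), the whole segment lies in \(\operatorname{conv}A\). That set lies in the closed ball of radius \(R=D\) centred at any point of \(A\).

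\textbf{The lowest gap and the choice of \(s'\).} The lowest gap \((a_{1},b_{1})\) needs separate treatment. Its left endpoint is either \(s\), or a point of \(C(A)\) with \(a_{1}>s\), by maximality. I simply drop it from the chain and bound \(\sqrt{\varepsilon_{1}}\) crudely. Lemma~\ref{lem:contact}(iii) gives \(\varepsilon_{1}\leq b_{1}\leq D/\sqrt3\), so
\[\sqrt{\varepsilon_{1}}\leq\sqrt D\leq D^{2}s^{-3/2},\]
using \(s\leq D\). The chain is then \(S_{1},\dots,S_{n}\), playing the role of \(S_{0},\dots,S_{n-1}\), with the gaps \(\varepsilon_{2},\dots,\varepsilon_{n}\).

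For \(i<j\), Lemma~\ref{lem:rsm2}(i) with heights \(h=b_{i}\geq s\) and \(h'=b_{j}\) gives
\[\operatorname{dist}(S_{i},S_{j})\geq\sqrt{b_{i}/2}\,\sqrt{b_{j}-b_{i}}.\]
The gaps \(i+1,\dots,j\) are disjoint subintervals of \((b_{i},b_{j})\), so \(b_{j}-b_{i}\geq\varepsilon_{i+1}+\cdots+\varepsilon_{j}\). I take \(s'=s/2\). Then each segment has length \(\geq s'\), we have \(\sqrt{b_{i}/2}\geq\sqrt{s'/2}\), and \(R=D\geq s'\). So the hypotheses of Lemma~\ref{lem:rsm2}(ii) hold, and
\[\sum_{i=2}^{n}\sqrt{\varepsilon_{i}}\leq 8(s/2)^{-3/2}\pi D^{2}=16\sqrt2\,\pi\, s^{-3/2}D^{2}.\]

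\textbf{Conclusion.} Adding the bound for the lowest gap gives
\[\sum_{i\in F}\sqrt{\varepsilon_{i}}\leq(16\sqrt2\,\pi+1)\,s^{-3/2}D^{2}\leq 32\pi\,s^{-3/2}D^{2}.\]
Letting \(F\) exhaust the countable family of gaps finishes the proof.

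\textbf{Main obstacle.} The delicate part is the bookkeeping. First, the separation hypothesis of Lemma~\ref{lem:rsm2}(ii) must be derived with the correct indices: gap \(j\) lies between \(b_{j-1}\) and \(b_{j}\). Second, exactly one gap, the lowest, has no critical point available as its lower anchor; dropping it and bounding it by \(\sqrt D\) is the simplest fix, and the slack between \(16\sqrt2\,\pi\) and \(32\pi\) absorbs it. I would also double-check the containment of the segments in a ball of radius \(D\) via \(\operatorname{conv}A\).
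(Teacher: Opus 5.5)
Your proof is correct and follows the paper's route. It uses the same segments \(S_i=[x_i,(x_i+p_i)/2]\), separation from Lemma~\ref{lem:rsm2}(i), Lemma~\ref{lem:rsm2}(ii) with \(s'=s/2\), and the crude bound \(\sqrt{\varepsilon_1}\le s^{-3/2}D^2\) for one gap; the paper drops the lowest gap only when \(a_1\notin C(A)\), and otherwise anchors the chain at a critical point of height \(a_1\).

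The one real difference is your containment \(S_i\subseteq[x_i,p_i]\subseteq\operatorname{conv}A\), which uses criticality \(x_i\in\operatorname{conv}\Sigma_A(x_i)\). It gives a ball of radius \(D\), and indeed \(D/\sqrt3\) via Jung's ball, instead of the paper's \(2D/\sqrt3\). This improves the constant and shows that the factor \(4/3\) discussed in Remark~\ref{rem:rsmcor} is not actually needed.
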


\begin{proof}
By Theorem~\ref{thm:fu} the family is well defined and countable;
it suffices to bound finite subfamilies \(I_{1},\ldots,I_{m}\) ordered
by \(b_{1} < \cdots < b_{m}\). Each \(b_{i}\) is a critical value, so
pick a critical point \(x_{i}\) at height \(b_{i}\), a nearest point
\(p_{i} \in \Sigma_{A}\left( x_{i} \right)\), and the segment
\(S_{i} = \left\lbrack x_{i},\left( x_{i} + p_{i} \right)/2 \right\rbrack\),
of length \(b_{i}/2 > s': = s/2\); if \(a_{1} \in C(A)\), add
likewise \(S_{0}\) at height \(a_{1} \geq s\). By Jung's theorem
\(A \subseteq \overline{B}\left( c,D/\sqrt{3} \right)\) for the
circumcentre \(c\), and every point of every \(S_{i}\) lies within
\(b_{i} \leq D/\sqrt{3}\) (Lemma~\ref{lem:contact}(iii)) of a point of \(A\); hence
all segments lie in \(B_{R}(c)\) with \(R = 2D/\sqrt{3} \geq s'\).
For \(i < j\), the heights \(h_{i} < h_{j}\) of the associated critical
points differ by at least
\(\varepsilon_{i + 1} + \cdots + \varepsilon_{j}\), since the gaps
\(I_{i + 1},\ldots,I_{j}\) are disjoint subintervals of
\(\left( h_{i},h_{j} \right)\); Lemma~\ref{lem:rsm2}(i) then gives
\[\operatorname{dist}\left( S_{i},S_{j} \right) \geq \sqrt{s/2}\sqrt{\varepsilon_{i + 1} + \cdots + \varepsilon_{j}} \geq \sqrt{s'/2}\sqrt{\varepsilon_{i + 1} + \cdots + \varepsilon_{j}}.\]
Lemma~\ref{lem:rsm2}(ii) yields, when \(S_{0}\) exists,

\[\sum_{i = 1}^{m}\sqrt{\varepsilon_{i}}\, \leq \, 8(s/2)^{- 3/2}\pi\left( 2D/\sqrt{3} \right)^{2}\, = \,\frac{64\sqrt{2}}{3}\,\pi\, s^{- 3/2}D^{2}\, \leq \, 31\pi\, s^{- 3/2}D^{2},\]

and when \(a_{1} \notin C(A)\) (necessarily \(a_{1} = s\)), the same
bound applies to \(\varepsilon_{2},\ldots,\varepsilon_{m}\) while
\(\sqrt{\varepsilon_{1}} \leq \sqrt{D/\sqrt{3}} \leq s^{- 3/2}D^{2}\)
using \(s \leq D\).
\end{proof}

\begin{remark}[relation to {{\cite[Corollary 4.1]{RSM09}}}]\label{rem:rsmcor}
Lemma~\ref{lem:gaps} is a variant of \cite[Corollary 4.1]{RSM09}, which asserts
the bound \(\sum_{i}\sqrt{\varepsilon_{i}} \leq 8s^{- 3/2}\pi D^{2}\) for the
gaps of \((s,r) \smallsetminus C(A)\), for any \(0 < s < r\), and is stated
there without printed proof. We record two points of difference, in the spirit
of making the present paper self-contained rather than of correcting the
literature; the proof above is complete and independent, and nothing below
depends on the value of the constant.

First, on scope. As literally stated for arbitrary \(r\), the decomposition of
\((s,r) \smallsetminus C(A)\) includes the final interval
\(\left( \sup C(A),r \right)\), whose square-root length grows without bound in
\(r\) while the right-hand side does not; since \(\sup C(A) \leq D/\sqrt{3}\) by
Lemma~\ref{lem:contact}(iii), the inequality therefore cannot hold uniformly in
\(r\) unless the gaps are restricted. The meaningful content --- and the way the
corollary is used both in \cite{RSM09} and here --- concerns the gaps with
critical right endpoint, which is how Lemma~\ref{lem:gaps} is stated.

Second, on the constant. Reconstructing the estimate from
Lemma~\ref{lem:rsm2} gives a larger numerical factor than \(8\pi\), for two
reasons that the reader can check against the proof above. The segments
\(S_{i} = \left\lbrack x_{i},\left( x_{i} + p_{i} \right)/2 \right\rbrack\)
attached to critical points at heights \(b_{i} > s\) have length \(b_{i}/2 > s/2\),
not \(s\), so Lemma~\ref{lem:rsm2}(ii) must be applied with \(s' = s/2\),
contributing a factor \(2^{3/2}\); and those segments lie not in a ball of
radius \(D\) but in one of radius \(R = 2D/\sqrt{3}\), by Jung's theorem
together with the height bound \(d_{A} \leq D/\sqrt{3}\) of
Lemma~\ref{lem:contact}(iii), contributing a factor \(4/3\). Hence the
reconstruction yields
\[8\,(s/2)^{- 3/2}\pi\left( 2D/\sqrt{3} \right)^{2}\  = \ \frac{64\sqrt{2}}{3}\,\pi\, s^{- 3/2}D^{2}\  \approx \ 30.2\,\pi\, s^{- 3/2}D^{2},\]
which is the bound proved above, stated as \(32\pi\) for roundness. (Using the
sharper constant \(5\sqrt{2}\) that the proof of \cite[Lemma 4.2]{RSM09}
actually delivers, in place of the \(8\) at which it is stated, gives
\(\left( 80/3 \right)\pi \approx 26.7\,\pi\) instead.)

The exponent \(1/2\) matches Theorem~\ref{thm:fu} exactly: the critical-value
set of a compact planar set is at most half-dimensional, and its gaps are
square-root summable. The general statement is now available in \cite{RZ20}:
by \cite[Theorem 1.1]{RZ20}, for every \(s > 0\) the set
\(C(A) \cap \lbrack s,\infty)\) is a compact null set with finite degree-\(1/2\)
gap sum, and every such subset of \(\lbrack s,\infty)\) is contained in
\(C(A)\) for some compact \(A\) --- optimality in this embedding sense; and
\cite[Proposition 3.16]{RZ20} gives the stronger weighted dyadic bound
\(\sum_{n \geq 0}\delta_{n}^{3/2}\, G_{1/2}\left( C(A) \cap \lbrack\delta_{n + 1},\delta_{n}\rbrack \right) \leq 10^{10}D^{2}\)
with \(\delta_{n} = D2^{- n}\), where \(G_{1/2}\) denotes the sum of the square
roots of the gap lengths. The latter yields a bound of the form of
Lemma~\ref{lem:gaps} with a much larger constant, and also controls the gaps
near \(0\), which Lemma~\ref{lem:gaps} does not.
\end{remark}
\section{The integrated bound}\label{sec:integrated}

\begin{theorem}[Theorem~\ref{thm:A}]\label{thm:main}
 For every compact
\(A \subseteq \mathbb{R}^{2}\) and every \(r_{0} > 0\),

\[\int_{r_{0}}^{\infty}\beta_{1}\left( A^{(r)} \right)dr\, \leq \, 4050\,\frac{D^{4}}{r_{0}^{3}},\]

the integrand vanishing for \(r \geq D/\sqrt{3}\).
\end{theorem}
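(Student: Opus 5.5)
We combine Lemma~\ref{lem:rsm1} with Lemma~\ref{lem:gaps}, working gap by gap along the radius axis. The vanishing of the integrand for \(r \geq D/\sqrt{3}\) is Lemma~\ref{lem:contact}(iii), so the integral runs over \([r_{0}, D/\sqrt{3}]\). If \(r_{0} \geq D/\sqrt{3}\) there is nothing to prove. Otherwise \(r_{0} < D/\sqrt3 \leq D\), so Lemma~\ref{lem:gaps} applies with \(s = r_{0}\). Since \(C(A)\) is Lebesgue-null (Theorem~\ref{thm:fu}), the integral equals the sum over the maximal open intervals of \((r_{0},\infty)\smallsetminus C(A)\) that meet \((r_0, D/\sqrt3)\).

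On an unbounded such interval, and on one whose right endpoint is not critical, \(\beta_1 = 0\) by Lemma~\ref{lem:contact}(iii). Indeed, the only bounded gaps in \((r_0,\infty)\) have critical right endpoints, and the unbounded gap lies above \(\sup C(A)\). So only the gaps \(I_{i} = (a_{i},b_{i})\) of Lemma~\ref{lem:gaps} contribute.

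For \(r \in I_{i}\) the interval \([r,b_{i})\) is free of critical values. Applying Lemma~\ref{lem:rsm1} with \(\varepsilon = b_{i} - r\) gives
\[\beta_{1}(A^{(r)}) \leq \frac{\pi (D+2r)^{2}}{r\sqrt{2r(b_{i}-r)}} \leq \frac{\pi(1+2/\sqrt3)^{2}D^{2}}{\sqrt2\, r_{0}^{3/2}}\,(b_{i}-r)^{-1/2},\]
using \(r \leq D/\sqrt3\) and \(r \geq r_0\). Lemma~\ref{lem:rsm1} requires \(\varepsilon < r\). When \(b_i - r \geq r\), I instead apply it with \(\varepsilon' = r/2\) (say), since \([r, r+r/2) \subseteq [r,b_i)\) is still critical-free. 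This gives a bound \(\lesssim D^{2} r^{-2} \leq D^2 r_0^{-1/2} r^{-3/2}\). That bound is in turn \(\lesssim D^{2}r_0^{-3/2}(b_i-r)^{-1/2}\cdot\sqrt{(b_i-r)/r}\cdot\ldots\), which is messy, so I will bound the two regimes separately. The cleaner route: in this regime \((b_i - r)^{-1/2}\) is replaced by \((r/2)^{-1/2}\leq \sqrt2\,(b_i-r)^{-1/2}\cdot\sqrt{(b_i-r)/r}\), and that is not smaller. Instead I use that \((r/2)^{-1/2} \leq \sqrt{2/r_0}\) and that these radii lie in a set of total length at most \(D/\sqrt3\). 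This yields a contribution \(\lesssim D^{2}r_0^{-3/2}\cdot r_0^{-1/2}\cdot D \leq D^4 r_0^{-3}\cdot(\text{const})\), using \(r_0 \leq D\).

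Integrating the main regime over \(I_{i}\) gives \(\int_{a_i}^{b_i}(b_i-r)^{-1/2}dr = 2\sqrt{\varepsilon_{i}}\). Summing over \(i\) with Lemma~\ref{lem:gaps} then gives
\[\sum_i 2\sqrt{\varepsilon_i}\cdot c\,D^{2}r_0^{-3/2} \leq 64\pi c\, D^{4} r_0^{-3},\]
with \(c = \pi(1+2/\sqrt3)^2/\sqrt2 \approx 10.3\). This yields a constant of about \(64\pi \cdot 10.3 \approx 2070\). Adding the small-\(\varepsilon\)-violating regime, whose constant is modest, should stay under \(4050\). The exponents match the statement exactly: \(D^{2}\) from Lemma~\ref{lem:rsm1} times \(D^{2}\) from the gap sum, and \(r_0^{-3/2}\cdot r_0^{-3/2}\).

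The main obstacle is bookkeeping, not ideas. Three points need care: measurability of \(r\mapsto\beta_1(A^{(r)})\), which I would handle by noting it is constant on each gap, hence a countable sum of step functions off a null set; the side condition \(\varepsilon<r\) of Lemma~\ref{lem:rsm1} near the left ends of long gaps; and checking that the numerical constant really closes below \(4050\). If the constant proves tight, I would use the cruder bound \(D+2r \leq (1+2/\sqrt3)D\) only where needed.
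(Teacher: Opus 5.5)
Your proposal is correct and is essentially the paper's proof: on each gap \(I_i=(a_i,b_i)\) with critical right endpoint you apply Lemma~\ref{lem:rsm1} with \(\varepsilon=b_i-r\), falling back to \(\varepsilon=r/2\) when \(b_i-r\ge r\), and then sum the resulting \(2\sqrt{\varepsilon_i}\) with Lemma~\ref{lem:gaps} at \(s=r_0\); the paper packages the same case split as \(\varepsilon_r=\min\{b_i-r,r/2\}\) with \(\varepsilon_r^{-1/2}\le(b_i-r)^{-1/2}+(r_0/2)^{-1/2}\). The only difference is that you integrate over \([r_0,D/\sqrt3]\) and use \(D+2r\le(1+2/\sqrt3)D\), whereas the paper integrates over \([r_0,D]\) with \(D+2r\le 3D\), so your constant comes out near \(2082\) rather than the paper's roughly \(4048\).
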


\begin{proof}
If \(C(A) = \varnothing\) --- as happens for a convex \(A\) --- then
Lemma~\ref{lem:contact}(iii) gives \(\beta_{1}\left( A^{(r)} \right) = 0\) for
every \(r > 0\) and there is nothing to prove; so assume
\(C(A) \neq \varnothing\), whence \(0 < \sup C(A) \leq D/\sqrt{3}\). If
\(D < r_{0}\) then \(r_{0} > D \geq D/\sqrt{3}\), so by
Lemma~\ref{lem:contact}(iii) the integrand vanishes on
\(\left\lbrack r_{0},\infty \right)\); the left-hand side is then \(0\) and the
inequality is trivial, the right-hand side being nonnegative. So
assume \(D \geq r_{0}\) and integrate over
\(\left\lbrack r_{0},D \right\rbrack\). By Theorem~\ref{thm:fu}, almost every
\(r \in \left\lbrack r_{0},D \right\rbrack\) lies in a maximal open gap
of \(\left( r_{0},\infty \right) \smallsetminus C(A)\). On the final gap
\(\left( \sup C(A),\infty \right)\) the integrand vanishes: by
Lemma~\ref{lem:contact}(iii), a bounded complementary component at radius \(r\)
would force a critical value strictly above \(r\), so
\(\beta_{1}\left( A^{(r)} \right) = 0\) for every \(r \geq \sup C(A)\). (It is
this critical-value statement, and not the numerical threshold
\(D/\sqrt{3}\), that is needed here: the final gap may begin well below
\(D/\sqrt{3}\).) Let \(r\) lie in a gap
\(I_{i} = \left( a_{i},b_{i} \right)\) with \(b_{i} \in C(A)\) and set
\(\varepsilon_{r}: = \min\left\{ b_{i} - r,\ r/2 \right\} \in (0,r)\);
then \(\left\lbrack r,r + \varepsilon_{r} \right) \subseteq I_{i}\) is
free of critical values, and Lemma~\ref{lem:rsm1} with \(D + 2r \leq 3D\) gives

\[\beta_{1}\left( A^{(r)} \right)\, \leq \,\frac{9\pi D^{2}}{r_{0}\sqrt{2r_{0}}}\,\varepsilon_{r}^{- 1/2}\, \leq \,\frac{9\pi D^{2}}{r_{0}\sqrt{2r_{0}}}\left\lbrack \left( b_{i} - r \right)^{- 1/2} + \left( r_{0}/2 \right)^{- 1/2} \right\rbrack.\]

Integrating, the second term contributes at most
\(9\pi D^{3}r_{0}^{- 2}\), while
\[\int_{I_{i} \cap \left\lbrack r_{0},D \right\rbrack}^{}\left( b_{i} - r \right)^{- 1/2}dr \leq 2\sqrt{\varepsilon_{i}}\]
and Lemma~\ref{lem:gaps} with \(s = r_{0}\) give for the first term at most
\[\left( 9\pi D^{2}/r_{0}\sqrt{2r_{0}} \right) \cdot 64\pi r_{0}^{- 3/2}D^{2} = \left( 576/\sqrt{2} \right)\pi^{2}D^{4}r_{0}^{- 3}.\]
Since \(D \geq r_{0}\), the total is at most
\(\left( \left( 576/\sqrt{2} \right)\pi^{2} + 9\pi \right)D^{4}r_{0}^{- 3} \leq 4050\, D^{4}r_{0}^{- 3}\).
\end{proof}

\begin{remark}[on sharpness]\label{rem:sharp}
 The constant is not optimized and
the exponents are not sharp: for the \(N \times N\) square grid with
spacing comparable to \(2r_{0}\), the integral is of order
\(N^{2}r_{0} \asymp D^{2}/r_{0}\), while the bound is of order
\(D^{4}/r_{0}^{3}\). We expect the truth to be an area bound,
\(\int_{r_{0}}^{\infty}\beta_{1}\left( A^{(r)} \right)dr \leq C\left| A^{\left( r_{0} \right)} \right|/r_{0}\)
or similar, but the natural packing proof fails on configurations of
radius-ephemeral holes recurring at one location, and we leave the sharp
form open; for the applications in \cite{G26} any polynomial dependence on
\(D\) suffices.
\end{remark}

\subsection{Two further consequences}\label{sec:further}

The critical-value calculus above yields two more facts about planar parallel
sets that are used in \cite{G26} and are of independent interest: the
threshold \(D/\sqrt{3}\) of Lemma~\ref{lem:contact}(iii) can be reached
without critical values, by Jung's theorem, and the boundary length of a
parallel set is controlled by its area.

\begin{lemma}[parallel sets are star-shaped above \(D/\sqrt{3}\)]\label{lem:star}
~ Let
\(K \subset \mathbb{R}^{2}\) be compact nonempty with \(d = \operatorname{diam}K\),
let \(r \geq d/\sqrt{3}\) and let \(c\) be the centre of the smallest
disc containing \(K\). Then \(K^{(r)}\) is star-shaped about every
\(p\) with \(\lVert p - c \rVert \leq r - d/\sqrt{3}\); in particular it is
contractible and \(\beta_{1}K^{(r)} = 0\).
\end{lemma}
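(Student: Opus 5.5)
Fix \(p\) with \(\lVert p-c\rVert \le r-d/\sqrt3\), a point \(x\in K^{(r)}\), and \(t\in[0,1]\). The goal is \(y=p+t(x-p)\in K^{(r)}\), i.e. some point of \(K\) lies within \(r\) of \(y\). The approach is to produce this witness directly from the two discs that naturally contain \(y\).

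\textbf{Step 1: two discs.} By Jung's theorem the smallest enclosing disc has radius \(\rho\le d/\sqrt3\). Hence \(K\subseteq \overline B(c,d/\sqrt3)\) and \(\overline B(p,r)\supseteq \overline B(c,\rho)\supseteq K\). Since \(x\in K^{(r)}\), pick \(a\in K\) with \(\lVert x-a\rVert\le r\). If \(y\in\overline B(p,r)\) we are done with \(a\). By convexity of \(\overline B(p,r)\) this holds in particular when \(x\in\overline B(p,r)\), since \(y\in[p,x]\). So assume \(\lVert y-p\rVert>r\).

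\textbf{Step 2: pick the witness \(a\).} Take any \(a\in K\) with \(\lVert x-a\rVert\le r\); I claim \(\lVert y-a\rVert\le r\) whenever \(\lVert y-p\rVert\ge r\). The function \(\phi(t)=\lVert p+t(x-p)-a\rVert^2\) is a convex quadratic in \(t\). We know:
\begin{itemize}
\item \(\phi(1)\le r^2\);
\item \(\phi(0)=\lVert p-a\rVert^2\le r^2\), since \(a\in K\subseteq\overline B(p,r)\).
\end{itemize}
Convexity then gives \(\phi(t)\le r^2\) for all \(t\in[0,1]\). So \(y\in\overline B(a,r)\subseteq K^{(r)}\) with no case split at all.

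This shows every segment \([p,x]\) with \(x\in K^{(r)}\) lies in \(K^{(r)}\). That is exactly star-shapedness about \(p\). Contractibility follows by the straight-line homotopy to \(p\). Then \(\check H^1(K^{(r)})=0\) because a star-shaped compact set is contractible and Čech cohomology is homotopy invariant; equivalently, any bounded complementary component would be trapped, but the ray from \(p\) through a point of it escapes outside \(K^{(r)}\) without meeting \(K^{(r)}\), a contradiction. Either way \(\beta_1K^{(r)}=0\).

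\textbf{Main obstacle.} The only genuinely nonformal input is the inclusion \(K\subseteq \overline B(p,r)\). This rests on Jung's bound \(\rho\le d/\sqrt3\) together with the triangle inequality \(\lVert p-a\rVert\le\lVert p-c\rVert+\rho\le r\). I would double-check that "centre of the smallest disc" is indeed Jung's centre; it is, since the minimal enclosing disc is unique. With that inclusion in hand, the convexity argument is immediate.
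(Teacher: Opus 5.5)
Your proof is correct and is essentially the paper's argument. Jung's theorem and the triangle inequality give \(K\subseteq\overline{B}(p,r)\), and convexity of the distance to the witness \(a\) along \([p,x]\) then yields \(\lVert y-a\rVert\le\max\{\lVert p-a\rVert,\lVert x-a\rVert\}\le r\); as you yourself observe, the case split in Step 1 is superfluous.
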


\begin{proof}
By Jung's theorem
\(K \subseteq B\left( c,d/\sqrt{3} \right)\). For \(x \in K^{(r)}\)
pick \(q \in K\) with \(\lVert x - q \rVert \leq r\); convexity of
\(u \mapsto \lVert u - q \rVert\) gives, for \(y \in \lbrack p,x\rbrack\),
\(\lVert y - q \rVert \leq \max\{\lVert p - q \rVert,\lVert x - q \rVert\} \leq \max\{\lVert p - c \rVert + d/\sqrt{3},\ r\} = r\),
so \(y \in B(q,r) \subseteq K^{(r)}\). \end{proof}

Lemma~\ref{lem:star} gives a second proof of the last assertion of
Lemma~\ref{lem:contact}(iii), and says more: above \(D/\sqrt{3}\) the
parallel set is not merely hole-free but contractible.

\begin{proposition}[perimeter]\label{prop:perimeter}
For every compact \(A \subset \mathbb{R}^{2}\) and every \(r \notin C(A)\)
at which \(s \mapsto \left| A^{(s)} \right|\) is differentiable,
\[\mathcal{H}^{1}\left( \partial A^{(r)} \right)\  \leq \ \frac{2\left| A^{(r)} \right|}{r}.\]
The exceptional set is the union of \(C(A)\), which is Lebesgue-null by
Theorem~\ref{thm:fu}, and the set of radii at which the nondecreasing function
\(s \mapsto \left| A^{(s)} \right|\) fails to be differentiable, which is
Lebesgue-null as well; so the bound holds at almost every \(r > 0\).
\end{proposition}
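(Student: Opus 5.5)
The plan is to combine Kneser's inequality with the coarea/Steiner-type identity expressing the derivative of the parallel volume as the boundary length. First, I would establish that at a noncritical radius \(r\) the derivative of \(V(s):=|A^{(s)}|\), where it exists, equals \(\mathcal{H}^{1}(\partial A^{(r)})\). By the proof of Lemma~\ref{lem:bridge}, for \(r\notin C(A)\) we have \(\partial A^{(r)}=\{d_A=r\}\), and this level set is a compact Lipschitz \(1\)-manifold. The coarea formula for the \(1\)-Lipschitz function \(d_A\), with \(|\nabla d_A|=1\) a.e. off \(A\), gives
\[V(r+h)-V(r)=\int_r^{r+h}\mathcal{H}^1(\{d_A=s\})\,ds.\]
Hence \(V'(r)=\mathcal{H}^1(\{d_A=r\})\) at Lebesgue points of \(s\mapsto\mathcal{H}^1(\{d_A=s\})\). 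To get equality at every noncritical differentiability point, rather than only almost everywhere, I would argue as follows. Near a noncritical level, the Clarke implicit function theorem (\cite[Theorem 3.1]{Fu85}) gives a bi-Lipschitz collar \(\{r\le d_A\le r+h\}\cong\{d_A=r\}\times[r,r+h]\). With that collar, lower semicontinuity of length, or Fu's result that the one-sided derivatives of \(V\) are the Minkowski contents \(\mathcal{H}^1\) of the level set, gives \(\mathcal{H}^1(\partial A^{(r)})\le V'(r)\). An inequality is all we need here.

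Second, I would apply Kneser's inequality \cite{Kne51}: \(s\mapsto V(s)/s^{2}\) is nonincreasing for \(s>0\). Equivalently, \(V(\lambda r)\le\lambda^{2}V(r)\) for \(\lambda\ge1\). Writing \(V(r+h)\le(1+h/r)^2V(r)\) and letting \(h\downarrow0\), the right derivative satisfies \(V'_+(r)\le 2V(r)/r\). At a point of differentiability this is \(V'(r)\le 2|A^{(r)}|/r\). Combining with the first step yields \(\mathcal{H}^1(\partial A^{(r)})\le V'(r)\le 2|A^{(r)}|/r\).

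Finally, the exceptional set is handled as the statement says. \(C(A)\) is null by Theorem~\ref{thm:fu}. The function \(V\) is nondecreasing, so by Lebesgue's theorem it is differentiable almost everywhere.

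The main obstacle is the first step: identifying \(\mathcal{H}^1(\partial A^{(r)})\) with, or bounding it by, \(V'(r)\) at a specific radius rather than almost everywhere. I would first try the route through the Lipschitz-manifold structure of \(\{d_A=s\}\) for \(s\) near \(r\). Since \(C(A)\cup\{0\}\) is compact, a whole neighbourhood of \(r\) is noncritical. On that neighbourhood the flow along a generalized gradient of \(d_A\) maps \(\{d_A=r\}\) onto \(\{d_A=s\}\) with Lipschitz constant tending to \(1\) as \(s\to r\). This would make \(s\mapsto\mathcal{H}^1(\{d_A=s\})\) continuous at \(r\), so \(r\) is a Lebesgue point and the coarea identity gives equality. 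If the flow continuity proves delicate, I would fall back on Fu's and Stachó's description \cite{Sta76} of the one-sided derivatives of the parallel volume as boundary lengths.
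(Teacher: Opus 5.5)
Your proposal is correct. Its skeleton is the paper's: Kneser's inequality bounds the right difference quotients of \(V(s)=|A^{(s)}|\) in the limit by \(2V(r)/r\), and what remains is to link those quotients to \(\mathcal{H}^{1}(\partial A^{(r)})\).

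The difference lies in that link. The paper quotes the equality \(V'(r)=\mathcal{H}^{1}(\partial A^{(r)})\) at noncritical differentiability points from \cite[Theorem 3.3]{RSS09}. You note that only \(\mathcal{H}^{1}(\partial A^{(r)})\le V'(r)\) is needed. You prove it from the coarea identity \(V(r+h)-V(r)=\int_{r}^{r+h}\mathcal{H}^{1}(\{d_A=s\})\,ds\) together with right lower semicontinuity of \(s\mapsto\mathcal{H}^{1}(\{d_A=s\})\) at \(r\).

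That step works, but it should be spelled out:
\begin{itemize}
\item A collar homeomorphism from the isotopy lemma gives homeomorphisms \(\Phi_{s}\) of \(\{d_A=r\}\) onto \(\{d_A=s\}\) converging uniformly to the identity as \(s\downarrow r\). Continuity of the collar suffices; bi-Lipschitz is not needed.
\item Each of the finitely many components of the compact \(1\)-manifold \(\{d_A=r\}\) is a Jordan curve \(\gamma\). The curves \(\Phi_{s}\circ\gamma\) are Jordan curves whose traces make up \(\{d_A=s\}\).
\item Length is lower semicontinuous under uniform convergence and equals \(\mathcal{H}^{1}\) of the trace for simple curves. Hence \(\mathcal{H}^{1}(\{d_A=r\})\le\liminf_{s\downarrow r}\mathcal{H}^{1}(\{d_A=s\})\).
\item Averaging in the coarea identity then gives \(\mathcal{H}^{1}(\partial A^{(r)})\le\liminf_{h\downarrow 0}(V(r+h)-V(r))/h\).
\end{itemize}

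Your route is self-contained apart from the coarea formula and the isotopy lemma. Both halves are estimates on one-sided difference quotients, a \(\liminf\) below and a \(\limsup\) above. So it never uses differentiability of \(V\) at \(r\), and it proves the bound at every \(r\notin C(A)\), shrinking the exceptional set to \(C(A)\). The paper's citation is shorter and gives the full equality, which is more than the proposition needs.

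Your final paragraph should go. It asserts, without proof, a generalized-gradient flow whose level-to-level maps have Lipschitz constants tending to \(1\), making \(s\mapsto\mathcal{H}^{1}(\{d_A=s\})\) continuous at \(r\). This is not needed for the inequality. If you want a literature fallback, the precise one is \cite[Theorem 3.3]{RSS09}, not an unspecified result of Fu.
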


\begin{proof}
Kneser's inequality \cite{Kne51}, which for a compact
\(A \subset \mathbb{R}^{d}\) and \(0 < s \leq t\) states that
\(\left| A^{(t)} \right| \leq (t/s)^{d}\left| A^{(s)} \right|\), makes
\(s \mapsto \left| A^{(s)} \right|/s^{2}\) non-increasing in the plane; see
\cite{Sta76} for this and for the regularity of the parallel volume function.
Hence at a point of differentiability
\(\frac{d}{ds}\left| A^{(s)} \right| \leq 2\left| A^{(r)} \right|/r\), and
the derivative equals \(\mathcal{H}^{1}\left( \partial A^{(r)} \right)\)
by \cite[Theorem 3.3]{RSS09}. \end{proof}

\section{A scale-uniform bound through the maxima of the distance function}\label{sec:maxima}

Theorem~\ref{thm:main} controls an integral over \(r\). For statements uniform
in the radius the right deterministic object is a count of components of local
maxima. Let \(\mathcal{M}_{A} \subseteq \mathbb{R}^{2} \smallsetminus A\) be
the set of local maxima of \(d_{A}\); the following elementary lemma gives
each of its components a height.

\begin{lemma}[heights of maxima components]\label{lem:maxima}
Let \(f\) be a continuous real function on a topological space \(X\), and let
\(M\) be the set of local maxima of \(f\). Then \(f\) is constant on every
connected component of \(M\). In particular \(d_{A}\) has a well-defined height
\(h(C)\) on every connected component \(C\) of \(\mathcal{M}_{A}\).
\end{lemma}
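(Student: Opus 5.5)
The plan is to prove the first assertion for an arbitrary topological space \(X\) by a connectedness argument that splits a component into two relatively open pieces at every threshold \(t\). No countability or metric assumption will be needed. The statement about \(d_{A}\) then follows by taking \(X = \mathbb{R}^{2}\) and \(f = d_{A}\), which is continuous (indeed \(1\)-Lipschitz), with \(\mathcal{M}_{A}\) the set of its local maxima.

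Fix a connected component \(C\) of \(M\), with the subspace topology. For every \(x \in M\), choose an open neighbourhood \(U_{x}\) of \(x\) in \(X\) on which \(f \leq f(x)\); this is what it means for \(x\) to be a local maximum. For a real number \(t\), split \(C\) into the two disjoint sets
\[P_{t} := \{ x \in C:\ f(x) > t\}, \qquad Q_{t} := \{ x \in C:\ f(x) \leq t\}.\]
Their union is \(C\), and we show that each is relatively open in \(C\).
\begin{itemize}
\item \(P_{t}\) is relatively open simply because \(f\) is continuous.
\item \(Q_{t}\) is relatively open by the local-maximum property, and this is the one step that uses the hypothesis. If \(x \in Q_{t}\), then every \(y \in U_{x} \cap C\) satisfies \(f(y) \leq f(x) \leq t\). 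Hence \(U_{x} \cap C \subseteq Q_{t}\), and \(U_{x} \cap C\) is a relative neighbourhood of \(x\) in \(C\).
\end{itemize}
Since \(C\) is connected, for every \(t\) one of the two sets is empty. That is, either \(f > t\) on all of \(C\) or \(f \leq t\) on all of \(C\).

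Now pick any \(x_{0} \in C\) and put \(c = f(x_{0})\).
\begin{itemize}
\item Take \(t = c\). Then \(x_{0} \in Q_{c}\), so \(Q_{c}\) is nonempty, hence \(P_{c}\) is empty, and so \(f \leq c\) on \(C\).
\item Take any \(t < c\). Then \(x_{0} \in P_{t}\), so \(P_{t}\) is nonempty, hence \(Q_{t}\) is empty, and so \(f > t\) on \(C\). Letting \(t \uparrow c\) gives \(f \geq c\) on \(C\).
\end{itemize}
Together these show \(f \equiv c\) on \(C\).

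The only point needing care is the choice of the splitting. The naive split \(\{ f = c\}\) versus \(\{ f \neq c\}\) cannot be shown open directly, because a local maximum \(y\) near \(x_{0}\) might a priori have a smaller value. The one-sided sets \(P_{t}\) and \(Q_{t}\) get around this: openness of \(P_{t}\) uses only continuity, and openness of \(Q_{t}\) uses only the local-maximum property. This is why the argument needs no countability of \(X\).
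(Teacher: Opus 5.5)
Your proof is correct and is essentially the paper's argument. The paper also shows that the sublevel set \(\{x \in C : f(x) \leq a\}\) is both open and closed in \(C\), closed by continuity and open by the local-maximum property, and gets a contradiction with connectedness; you argue directly through the complementary superlevel set and the limit \(t \uparrow c\), and you apply the local-maximum property at every point of \(Q_{t}\), which slightly simplifies the paper's case split between \(f < a\) and \(f = a\).
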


\begin{proof}
Let \(C\) be a component of \(M\) and suppose \(f|_{C}\) is not constant.
Choose \(a\) strictly between two values attained on \(C\) and put
\(C_{\leq a}: = \{ x \in C:f(x) \leq a\}\). This set is closed in \(C\) by
continuity. It is also open in \(C\): at a point \(x\) with \(f(x) < a\) by
continuity, and at a point \(x\) with \(f(x) = a\) because \(x\) is a local
maximum, so \(f \leq f(x) = a\) on a neighbourhood of \(x\). Thus
\(C_{\leq a}\) is a nonempty proper subset of \(C\) that is both open and
closed in \(C\), contradicting connectedness.
\end{proof}

For \(s > 0\) let \(m_{A}(s) \in \{ 0,1,\ldots\} \cup \{\infty\}\) be the
number of connected components of \(\mathcal{M}_{A}\) whose (componentwise
constant) height exceeds \(s\):
\[m_{A}(s): = \#\left\{ C\text{ a component of }\mathcal{M}_{A}:\ h(C) > s \right\}.\]

\begin{theorem}[Theorem~\ref{thm:C}]\label{thm:maxima}
 For every compact
\(A \subseteq \mathbb{R}^{2}\) and every \(s > 0\),

\[\sup_{r \geq s}\ \beta_{1}\left( A^{(r)} \right)\, \leq \, m_{A}(s).\]
\end{theorem}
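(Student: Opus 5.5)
Fix \(r \geq s\) and assume \(m_{A}(s) < \infty\), otherwise there is nothing to prove. The plan is an injection from the holes of \(A^{(r)}\) into the components of \(\mathcal{M}_{A}\) of height \(> s\). Let \(L\) be a bounded component of \(\mathbb{R}^{2} \smallsetminus A^{(r)}\).

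\textbf{Step 1: each hole contains a component of high maxima.} Argue as in the proof of Lemma~\ref{lem:contact}(iii). Choose \(x \in \overline{L}\) maximizing \(d_{A}\) on the compact set \(\overline{L}\). Since \(d_{A} = r\) on \(\partial L\) and \(d_{A} > r\) on \(L\), the point \(x\) lies in the open set \(L\). Hence \(x\) is a local maximum of \(d_{A}\) in \(\mathbb{R}^{2}\), with height \(d_{A}(x) > r \geq s\).

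Let \(C\) be the component of \(\mathcal{M}_{A}\) containing \(x\). By Lemma~\ref{lem:maxima}, \(d_{A}\) equals \(h(C) = d_{A}(x) > r\) on all of \(C\). So \(C \subseteq \{d_{A} > r\} = \mathbb{R}^{2} \smallsetminus A^{(r)}\). Since \(C\) is connected and meets \(L\), which is a component of that open set, we get \(C \subseteq L\). Finally \(h(C) > s\), so \(C\) is counted by \(m_{A}(s)\).

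\textbf{Step 2: the map is injective.} Send \(L\) to such a component \(C_{L}\); choose one if several are available. Distinct holes are disjoint, and each \(C_{L}\) is nonempty and contained in \(L\). Therefore \(L \mapsto C_{L}\) is injective.

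\textbf{Step 3: conclusion.} The number of bounded complementary components of \(A^{(r)}\) is at most \(m_{A}(s)\). By Alexander duality (Remark~\ref{rem:conv}), that number is \(\beta_{1}(A^{(r)})\). Taking the supremum over \(r \geq s\) gives the theorem. No finiteness of \(\beta_{1}\) is needed in advance: any finite collection of holes injects, which handles an infinite count as well.

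The only delicate point is the containment \(C \subseteq L\). It relies on the height of \(C\) being constant and strictly above \(r\); this is exactly what Lemma~\ref{lem:maxima} provides. Without that constancy, a component of maxima might dip to level \(r\) and pass from one hole into another. Everything else is routine.
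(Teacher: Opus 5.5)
Your proposal is correct and follows the paper's own proof: the maximum of \(d_{A}\) on the closure of each hole is attained at an interior local maximum, Lemma~\ref{lem:maxima} keeps its whole component of \(\mathcal{M}_{A}\) at a height above \(r\) and hence inside that hole, and disjointness of the holes gives the injection. Your explicit remarks on the case \(m_{A}(s) = \infty\) and on the case of infinitely many holes are harmless additions; the paper's injection argument covers both implicitly.
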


\begin{proof}
Fix \(r \geq s\) and let \(\Omega\) be a bounded connected
component of \(\left\{ d_{A} > r \right\}\); it is open with
\(\partial\Omega \subseteq \left\{ d_{A} = r \right\}\). The maximum of
\(d_{A}\) over the compact \(\overline{\Omega}\) exceeds \(r\), so it is
attained at an interior point \(x^{*} \in \Omega\), which is then a
local maximum of height \(> r \geq s\). Let \(C\) be its component in
\(\mathcal{M}_{A}\); by Lemma~\ref{lem:maxima},
\(d_{A} \equiv d_{A}\left( x^{*} \right) > r\) on \(C\), so
\(C \subseteq \left\{ d_{A} > r \right\}\), and being connected and
meeting \(\Omega\), \(C \subseteq \Omega\). Distinct components
\(\Omega\) are disjoint, so the assignment \(\Omega \mapsto C\) is
injective into the components counted by \(m_{A}(s)\).
\end{proof}

\begin{remark}\label{rem:maxima}
 In the language of persistence, \(m_{A}(s)\)
dominates the number of degree-one classes of the offset filtration
dying above \(s\): each hole disappears, as \(r\) grows, onto a maximum
component. Theorem~\ref{thm:maxima} converts every question about
\(\sup_{r \geq s}\beta_{1}(A^{(r)})\) into a question about the single
quantity \(m_{A}(s)\). For a random compact set, the moments of \(m_{A}(s)\)
are a strictly stronger requirement than those of
\(\int_{s}^{\infty}\beta_{1}(A^{(r)})\,dr\): short-lived holes are invisible
to any countable family of radii but are counted by \(m\). Section~\ref{sec:combs}
shows that \(m_{A}(s)\) cannot be controlled deterministically; its
probabilistic estimation for the Brownian trace is discussed in
\cite[Section 8]{G26}. Lemma~\ref{lem:contact}(ii) --- an empty disc of radius
\(> s\) together with two points of \(A\) on its boundary circle at mutual
distance \(\geq \sqrt{3}\, s\) --- describes the configuration to be counted.
\end{remark}
\section{No pointwise bound: the two-combs family}\label{sec:combs}

Theorem~\ref{thm:main} bounds an integral of the hole count over radii, not
the hole count at a prescribed radius. This section shows that the
restriction is not an artefact of the proof: the strategy of bounding
\(\beta_{1}(A^{(r)})\) at a fixed radius by any of the usual coarse geometric
functionals of \(A\) cannot succeed. We first fix the notion of increment count
that appears in Theorem~\ref{thm:B}, and record why it is a natural candidate.

\subsection{Increments and the decomposition into contractible pieces}\label{sec:pieces}

Let \(\gamma:\lbrack 0,\ell\rbrack \rightarrow \mathbb{R}^{2}\) be a
continuous path and \(\rho > 0\). Its increment times at scale \(\rho\) are
\(\theta_{0} = 0\) and
\(\theta_{j + 1} = \inf\left\{ t > \theta_{j}: \lVert \gamma(t) - \gamma(\theta_{j}) \rVert = \rho \right\}\),
with \(\inf\varnothing = \infty\); \(N_{\rho}(t) = \max\{ j:\theta_{j} \leq t\}\)
is the number of increments completed by time \(t\), and
\(c_{j} = \gamma(\theta_{j - 1})\) the anchor of the \(j\)-th increment. We
write \(N_{\rho}(\gamma) = N_{\rho}(\ell)\) for the increment count of the
whole path, and \(K_{t} = \gamma(\lbrack 0,t\rbrack)\) for its range up to
time \(t\). For a rectifiable curve traversed at unit speed, \(N_{\rho}\)
counts the successive displacements of size \(\rho\) along the curve.

\begin{proposition}\label{prop:pieces}
 Let \(r \geq 2\rho\) and
\(Y_{j}(r): = \bigcup_{s \in \left\lbrack \theta_{j - 1},\,\theta_{j} \land t \right\rbrack}\overline{B}\left( \gamma(s),r \right)\).
Then every ball in this union contains
\(\overline{B}\left( c_{j},r - \rho \right)\); consequently
\(Y_{j}(r)\) is star-shaped with respect to every point of that
disc, hence compact, connected and contractible, with
\({\check{H}}^{1}\left( Y_{j}(r) \right) = 0\); moreover
\(\overline{B}\left( c_{j},r \right) \subseteq Y_{j}(r) \subseteq \overline{B}\left( c_{j},r + \rho \right)\),
the pieces cover:
\(K_{t}^{(r)} = \bigcup_{j = 1}^{N_{\rho}(t) + 1}Y_{j}(r)\), and
consecutive pieces intersect.
\end{proposition}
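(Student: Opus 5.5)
The plan is a direct verification from the definition of the increment times. The first step is the containment of balls: for \(s \in [\theta_{j-1}, \theta_j \wedge t]\) we want \(\lVert \gamma(s) - c_j \rVert \leq \rho\). This holds because \(\theta_j\) is the first time after \(\theta_{j-1}\) at which the distance from \(c_j = \gamma(\theta_{j-1})\) equals \(\rho\). By continuity of \(\gamma\) and the intermediate value theorem, the distance \(\lVert \gamma(s) - c_j\rVert\), which starts at \(0\), cannot exceed \(\rho\) before \(\theta_j\). When \(\theta_j = \infty\) this holds for all \(s \le t\) directly. The triangle inequality then gives \(\overline{B}(c_j, r-\rho) \subseteq \overline{B}(\gamma(s), r)\) for every such \(s\), using \(r \ge 2\rho > \rho\). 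It also gives \(\overline{B}(\gamma(s),r) \subseteq \overline{B}(c_j, r+\rho)\). The inclusion \(\overline{B}(c_j,r) \subseteq Y_j(r)\) is the ball at \(s=\theta_{j-1}\).

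Next comes star-shapedness. Each closed disc \(\overline{B}(\gamma(s),r)\) is convex and contains the disc \(\overline{B}(c_j,r-\rho)\). So for any point \(p\) of that disc and any \(x\) in the union, \(x\) lies in some \(\overline{B}(\gamma(s),r)\) that also contains \(p\), and convexity puts the segment \([p,x]\) in that same ball, hence in \(Y_j(r)\). Compactness follows because \(Y_j(r) = \gamma([\theta_{j-1},\theta_j\wedge t])^{(r)}\) is the parallel set of a compact set. Contractibility follows from the straight-line homotopy to \(p\). The vanishing \(\check H^1(Y_j(r)) = 0\) follows because a compact star-shaped set has trivial Čech cohomology: either it is contractible and locally nice enough, or one argues via Alexander duality that its complement has no bounded component. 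In the latter argument, any point outside \(Y_j(r)\) escapes to infinity along the ray from \(p\), since if \(y\notin Y_j\) then \(p+\lambda(y-p)\notin Y_j\) for \(\lambda\ge1\). I would use this duality route to stay inside the paper's Čech convention.

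For the covering, I need the intervals \([\theta_{j-1},\theta_j\wedge t]\), \(j=1,\dots,N_\rho(t)+1\), to cover \([0,t]\). By definition \(\theta_{N_\rho(t)} \le t < \theta_{N_\rho(t)+1}\), so the last interval reaches \(t\). Here I should confirm that the \(\theta_j\) are strictly increasing with no accumulation below \(t\). Uniform continuity of \(\gamma\) on the compact \([0,\ell]\) gives \(\theta_{j+1}-\theta_j \ge \delta(\rho) > 0\), so \(N_\rho(t)\) is finite. The union of the \(Y_j(r)\) is then the union of the balls over \(s\in[0,t]\), which is \(K_t^{(r)}\). Consecutive pieces share the point \(\gamma(\theta_j)\) and thus intersect.

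The main obstacle is the slightly delicate handling of the endpoint: the infimum defining \(\theta_j\) must actually be attained, so that \(\lVert\gamma(\theta_j)-c_j\rVert = \rho\) and the bound \(\le\rho\) holds on the closed interval. This follows from continuity, since the set of times where the distance equals \(\rho\) is closed. A second subtlety is the Čech-cohomology claim for a star-shaped compactum, which I would settle by the ray argument above.
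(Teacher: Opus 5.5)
Your proof is correct and follows the paper's own direct verification: the triangle inequality gives the common disc \(\overline{B}(c_{j},r-\rho)\), a union of convex sets sharing a point is star-shaped, the inclusions come from the ball at \(s=\theta_{j-1}\), and covering and chaining go through \(\gamma(\theta_{j})\). You also make explicit points the paper leaves tacit: attainment of the infimum defining \(\theta_{j}\), finiteness of \(N_{\rho}(t)\) by uniform continuity, and compactness of \(Y_{j}(r)\) as the parallel set of \(\gamma([\theta_{j-1},\theta_{j}\wedge t])\), which does not follow from star-shapedness as the paper's ``hence'' suggests. The one variation is that you get \(\check{H}^{1}(Y_{j}(r))=0\) from Alexander duality and the outward-ray argument, whereas the paper uses homotopy invariance of \v{C}ech cohomology for the contractible compactum. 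Both routes are valid, and the paper's needs no local niceness, so your hedge there is unnecessary but harmless.
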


\begin{proof}
For
\(s \in \left\lbrack \theta_{j - 1},\theta_{j} \land t \right\rbrack\),
\(\lVert \gamma(s) - c_{j} \rVert \leq \rho\); if
\(\lVert y - c_{j} \rVert \leq r - \rho\) then
\(\lVert y - \gamma(s) \rVert \leq r\), proving the common-disc claim.
A union of convex sets with a common point \(y\) is star-shaped with
respect to \(y\), and star-shaped compacta are contractible, so their
\v{C}ech cohomology is that of a point by homotopy invariance. The
inclusions are the ball at \(s = \theta_{j - 1}\) and the triangle
inequality; the covering and chaining are immediate since
\(\gamma(\theta_{j})\) terminates piece \(j\) and initiates piece \(j + 1\).
\end{proof}

Combining Proposition~\ref{prop:pieces} with Lemma~\ref{lem:mv} by induction
along the pieces gives, for every \(r \geq 2\rho\),
\begin{equation}\label{eq:pieces}
\beta_{1}\left( K_{t}^{(r)} \right)\, \leq \,\sum_{m = 2}^{N_{\rho}(t) + 1}\left\lbrack \beta_{0}\left( \left( \bigcup_{j < m}Y_{j}(r) \right) \cap Y_{m}(r) \right) - 1 \right\rbrack .
\end{equation}
One might hope to bound the right-hand side by a function of
\(N_{\rho}(t)\) alone --- for instance quadratically, by counting the pairs of
pieces that meet. The family below shows that no bound of this kind holds: for
the curves \(\gamma_{\varepsilon}\) at \(r = 1\) and \(\rho = 1/2\), the sum
in \eqref{eq:pieces} has at most \(30\) terms and exceeds \(1/(2\varepsilon)\),
so a single term can be arbitrarily large.

\subsection{The two-combs family}\label{sec:twocombs}

Fix \(\rho = 1/2\) and count increments \(N_{\rho}\) as above, for a curve
traversed at unit speed. Theorem~\ref{thm:B} is the following statement
together with Corollary~\ref{cor:combs}.

\begin{theorem}[two combs]\label{thm:combs}
 For every
\(\varepsilon \in (0,1/10\rbrack\) there is a connected
rectifiable curve
\(\gamma_{\varepsilon} \subset \mathbb{R}^{2}\), contained in a
fixed ball of radius \(4\), with
\(\operatorname{length}\left( \gamma_{\varepsilon} \right) \leq 12\)
and \(N_{\rho}\left( \gamma_{\varepsilon} \right) \leq 30\) at
\(\rho = 1/2\), such that

\[\beta_{1}\left( \gamma_{\varepsilon}^{(1)} \right)\, \geq \,\frac{1}{2\varepsilon}.\]
\end{theorem}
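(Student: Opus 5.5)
The plan is to realize the picture described in the introduction explicitly and to verify everything by elementary distance computations. Only the combinatorics of the unit parallel set near two teeth needs care.

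\emph{Construction.} Let \(n=\lfloor 1/\varepsilon\rfloor\) and set \(h=2+\tfrac78\varepsilon^{2}\). The lower comb \(P\) consists of:
\begin{itemize}
\item the base \([0,n\varepsilon]\times\{0\}\);
\item the teeth \(\{k\varepsilon\}\times[0,\varepsilon^{2}/2]\), for \(k=0,\dots,n\).
\end{itemize}
The upper comb \(Q\) is the reflection of \(P\) in the line \(y=h/2\). Its base is at height \(h\) and its teeth hang down to tips \((k\varepsilon,h-\varepsilon^{2}/2)\).

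The two combs are joined by a detour. It runs from \((0,0)\) to \((-3/2,0)\), then up to \((-3/2,h)\), then across to \((0,h)\).

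The resulting \(\gamma_\varepsilon\) is connected and lies in \([-3/2,1]\times[0,h]\), hence in a disc of radius \(4\). Its length is at most \(2+\varepsilon+3+h<12\). Since each increment at scale \(\rho=1/2\) of a unit-speed curve consumes arc length at least \(1/2\), we get \(N_{1/2}(\gamma_\varepsilon)\le 24\le 30\).

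\emph{Islands.} For \(x\in[0,n\varepsilon]\), compute the upper profile \(u(x)=\max\{y:\ d_P(x,y)\le 1\}\) of \(P^{(1)}\). Above the strip this maximum is realized by the nearest tooth tip, since the base discs only reach \(y=1\). So \(u(x)=\varepsilon^{2}/2+\sqrt{1-t^{2}}\), where \(t=\operatorname{dist}(x,\varepsilon\mathbb Z)\le\varepsilon/2\). This gives
\[
u(k\varepsilon)=1+\tfrac12\varepsilon^{2},\qquad u\big((k+\tfrac12)\varepsilon\big)\le 1+\tfrac38\varepsilon^{2}.
\]
For the second estimate use \(\sqrt{1-t^{2}}\le 1-t^{2}/2\).

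By symmetry, the lower profile of \(Q^{(1)}\) is \(\ell(x)=h-u(x)\). At tooth abscissae, \(\ell(k\varepsilon)=1+\tfrac38\varepsilon^{2}<u(k\varepsilon)\), so the two parallel sets overlap in an island around each vertical line \(x=k\varepsilon\).

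At midpoints the reverse holds. Using \(\sqrt{1-t^2}\ge 1-t^2/2-t^4/2\) with \(\varepsilon\le 1/10\), one checks \(\ell((k+\frac12)\varepsilon)>u((k+\frac12)\varepsilon)\). So an open cell \(\{u(x)<y<\ell(x)\}\) appears over a neighbourhood of each midpoint.

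\emph{Holes.} The cell \(W_k\) around \(x=(k+\tfrac12)\varepsilon\) is bounded on four sides:
\begin{itemize}
\item below by \(P^{(1)}\);
\item above by \(Q^{(1)}\);
\item left and right by the two adjacent islands.
\end{itemize}
Hence \(W_k\) lies in a bounded component of \(\mathbb R^{2}\smallsetminus(P\cup Q)^{(1)}\), and distinct \(k\) give distinct components because the islands separate them. It remains to check two things:
\begin{itemize}
\item the detour's unit neighbourhood does not fill these cells;
\item \((P\cup Q)^{(1)}\) does not meet \(W_k\) from farther teeth.
\end{itemize}
For the first, the detour has \(x\le 0\) on its vertical part. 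Its horizontal parts are subsets of the bases, whose unit neighbourhoods stay below \(y=1\) and above \(y=h-1\). So the detour only affects cells with \(x<1\), i.e.\ at most the first \(O(1/\varepsilon)\cdot\varepsilon\)-fraction. I will simply discard cells whose midpoint has \(x<1\) if the base has length more than \(1\); alternatively I will lengthen the base to \(2\) while keeping the length \(\le 12\). For the second, note that a farther tooth tip lies at horizontal distance \(\ge 3\varepsilon/2\), so its disc is even lower.

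This leaves at least \(n-O(1)\ge 1/(2\varepsilon)\) holes for \(\varepsilon\le 1/10\); I will adjust the base length, say \(1.5\) or \(2\), so the count clears \(1/(2\varepsilon)\) with margin. This gives \(\beta_1(\gamma_\varepsilon^{(1)})\ge 1/(2\varepsilon)\) by Alexander duality (Remark~\ref{rem:conv}).

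The main obstacle is making the island and cell picture rigorous rather than pictorial. This requires:
\begin{itemize}
\item showing that \(u\) is exactly the nearest-tip formula above the level \(y=1\), including near the ends of the teeth, where the side of a tooth rather than its tip might be nearest;
\item obtaining the strict inequalities with explicit second- and fourth-order Taylor bounds, uniformly for \(\varepsilon\le 1/10\);
\item checking that each \(W_k\) is genuinely enclosed, i.e.\ that the boundary of \(P^{(1)}\cup Q^{(1)}\) around it is a closed curve.
\end{itemize}
I will handle the enclosure by exhibiting a closed loop inside \(P^{(1)}\cup Q^{(1)}\) that winds once around the midpoint \(((k+\tfrac12)\varepsilon,\,h/2)\). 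The loop runs along \(y=1/2\) in \(P^{(1)}\), up through island \(k+1\), back along \(y=h-1/2\) in \(Q^{(1)}\), and down through island \(k\).
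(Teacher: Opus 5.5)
Your proposal is correct and is the paper's two-combs argument rotated by a quarter turn: teeth of length \(\varepsilon^{2}/2\) at spacing \(\varepsilon\), a separation \(2+O(\varepsilon^{2})\) tuned so that the tooth-tip discs overlap while the mid-gap point \(((k+\tfrac12)\varepsilon,h/2)\) stays at distance \(>1\), a detour for connectedness, and enclosure by the rectangle \([k\varepsilon,(k+1)\varepsilon]\times[\tfrac12,h-\tfrac12]\), whose boundary lies in the parallel set; your bound \(N_{1/2}\le 2\operatorname{length}\le 24\) is simpler than the paper's segment-by-segment count. The one muddled step is the detour paragraph, where no cells need be discarded and the base need not be lengthened: the vertical leg at \(x=-3/2\) has its unit neighbourhood in \(\{x\le -1/2\}\) and the horizontal legs have theirs in \(\{y\le 1\}\cup\{y\ge h-1\}\), so all \(\lfloor 1/\varepsilon\rfloor\ge 1/(2\varepsilon)\) cells survive.
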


\begin{proof}
Set \(H = 0.44\), \(\delta = \varepsilon^{2}/2\),
\(K = \left\lfloor H/\varepsilon \right\rfloor\),
\(y_{k} = k\varepsilon\) for \(|k| \leq K\), and
\(b_{1} = 2 + 2\delta - \varepsilon^{2}/16\). Comb \(A\) is the spine
\(\{ 0\} \times \lbrack - H,H\rbrack\) with teeth
\(\lbrack 0,\delta\rbrack \times \left\{ y_{k} \right\}\); comb \(B\) is
its mirror image with spine
\(\left\{ b_{1} \right\} \times \lbrack - H,H\rbrack\) and teeth
\(\left\lbrack b_{1} - \delta,b_{1} \right\rbrack \times \left\{ y_{k} \right\}\)
at the same heights. The curve \(\gamma_{\varepsilon}\) traces
comb \(A\) from bottom to top (running out and back along each tooth),
follows the detour
\((0,H) \rightarrow (0,4.2) \rightarrow \left( b_{1},4.2 \right) \rightarrow \left( b_{1},H \right)\),
and traces comb \(B\) from top to bottom.

\emph{Book-keeping.} Each vertical leg of the detour has length
\(4.2 - H = 3.76\) and the horizontal leg has length \(b_{1}\), while each comb
pass has length \(2H + (2K + 1) \cdot 2\delta\). Using
\((2K + 1)2\delta \leq \left( 2H/\varepsilon + 1 \right)\varepsilon^{2} = 2H\varepsilon + \varepsilon^{2}\)
and \(b_{1} \leq 2 + \varepsilon^{2}\), the total length is at most
\[2\left( 2H + 2H\varepsilon + \varepsilon^{2} \right) + \left( 3.76 + b_{1} + 3.76 \right)\  \leq \ 2(0.88 + 0.088 + 0.01) + 7.52 + 2.01\  \leq \ 12\]
for \(\varepsilon \leq 1/10\), and the configuration fits in a ball of radius
\(4\). For the increments, at \(\rho = 1/2\): along each comb pass the ordinate
is non-decreasing and the abscissa varies by at most \(\delta\), so consecutive
increment anchors have ordinates differing by at least
\(\sqrt{\rho^{2} - \delta^{2}} \geq 0.49\), giving at most
\(1 + 2H/0.49 \leq 3\) increments per pass; along the three straight detour
segments, at most \(\left\lceil \operatorname{length}/\rho \right\rceil + 1\)
each, that is \(9\), \(6\) and \(9\) respectively. In total
\(N_{\rho} \leq 3 + 3 + 9 + 6 + 9 = 30\).

\emph{The cells.} Work at radius \(1\) in the strip \(|y| \leq H\). For
\(y\) with nearest tooth level \(y_{k}\) and \(u: = y - y_{k}\),
\(|u| \leq \varepsilon/2\), the rightmost point of
\(\gamma_{\varepsilon}^{(1)} \cap \left( \mathbb{R} \times \{ y\} \right)\)
contributed by comb \(A\) is at abscissa
\(f(y) = \delta + \sqrt{1 - u^{2}}\): the tooth \(k\) dominates the
spine because
\(1 - \sqrt{1 - u^{2}} \leq u^{2} \leq \varepsilon^{2}/4 < \delta\), and
dominates farther teeth trivially; the detour contributes nothing at
these heights (see below). Symmetrically comb \(B\) covers abscissae
\(\geq g(y) = b_{1} - \delta - \sqrt{1 - u^{2}}\). Thus the horizontal
line at height \(y\) is entirely covered between the outer boundaries if
and only if \(f(y) \geq g(y)\),
i.e.~\(\sqrt{1 - u^{2}} \geq 1 - \varepsilon^{2}/32\),
i.e.~\(|u| \leq w: = (\varepsilon/4)\sqrt{1 - \varepsilon^{2}/64}\);
call \(\left\{ \left| y - y_{k} \right| \leq w \right\}\) the \(k\)-th
island. Between consecutive islands, \(f(y) < g(y)\) strictly: at
\(u = \varepsilon/2\) the difference is
\(2 - \varepsilon^{2}/16 - 2\sqrt{1 - \varepsilon^{2}/4} \geq 3\varepsilon^{2}/16 > 0\),
using \(\sqrt{1 - x} \leq 1 - x/2\). The open
cell

\[R_{k}\,: = \,\left\{ (x,y):\ y_{k} + w < y < y_{k + 1} - w,\ \ f(y) < x < g(y) \right\},\quad\quad - K \leq k \leq K - 1,\]

is nonempty and disjoint from the unit parallel sets of both combs. It is
also disjoint from that of the detour: every point of \(R_{k}\) has
\(x \geq f(y) \geq \delta + \sqrt{1 - \varepsilon^{2}/4} > 1\), the strict
inequality because
\(1 - \sqrt{1 - \varepsilon^{2}/4} \leq \varepsilon^{2}/4 < \varepsilon^{2}/2 = \delta\),
and, symmetrically, \(b_{1} - x > 1\), while its distance to the
horizontal detour segment at height \(4.2\) is at least \(4.2 - H > 2\);
the nearest points of the two vertical detour segments lie at \((0,H)\)
and \(\left( b_{1},H \right)\), at distance \(\geq x > 1\), respectively
\(\geq b_{1} - x > 1\).

\emph{Enclosure.} Let
\[Q_{k}: = \left\lbrack \frac{1}{2},\ b_{1} - \frac{1}{2} \right\rbrack \times \left\lbrack y_{k},\ y_{k + 1} \right\rbrack.\]
Its boundary lies in \(\gamma_{\varepsilon}^{(1)}\): the vertical edges
are within distance \(\frac{1}{2} < 1\) of the spines; the horizontal
edges are at tooth heights, where
\(f\left( y_{k} \right) = \delta + 1 \geq g\left( y_{k} \right) = b_{1} - \delta - 1\),
so comb \(A\) covers abscissae up to \(\delta + 1\) and comb \(B\) from
\(b_{1} - \delta - 1\) on, jointly the whole edge. Since
\(R_{k} \subseteq \operatorname{int}Q_{k}\), the connected component of
\(\mathbb{R}^{2} \smallsetminus \gamma_{\varepsilon}^{(1)}\) containing
\(R_{k}\) cannot meet \(\partial Q_{k}\) and is therefore a bounded
component contained in \(\operatorname{int}Q_{k}\). The interiors of the \(Q_{k}\) are
pairwise disjoint, so these \(2K\) components are distinct, and
\[\beta_{1}\left( \gamma_{\varepsilon}^{(1)} \right) \geq 2K = 2\left\lfloor H/\varepsilon \right\rfloor \geq 0.88/\varepsilon - 2 \geq 1/(2\varepsilon)\]
for \(\varepsilon \leq 1/10\).

\begin{figure}[htbp]
\centering
\includegraphics[width=0.9\textwidth]{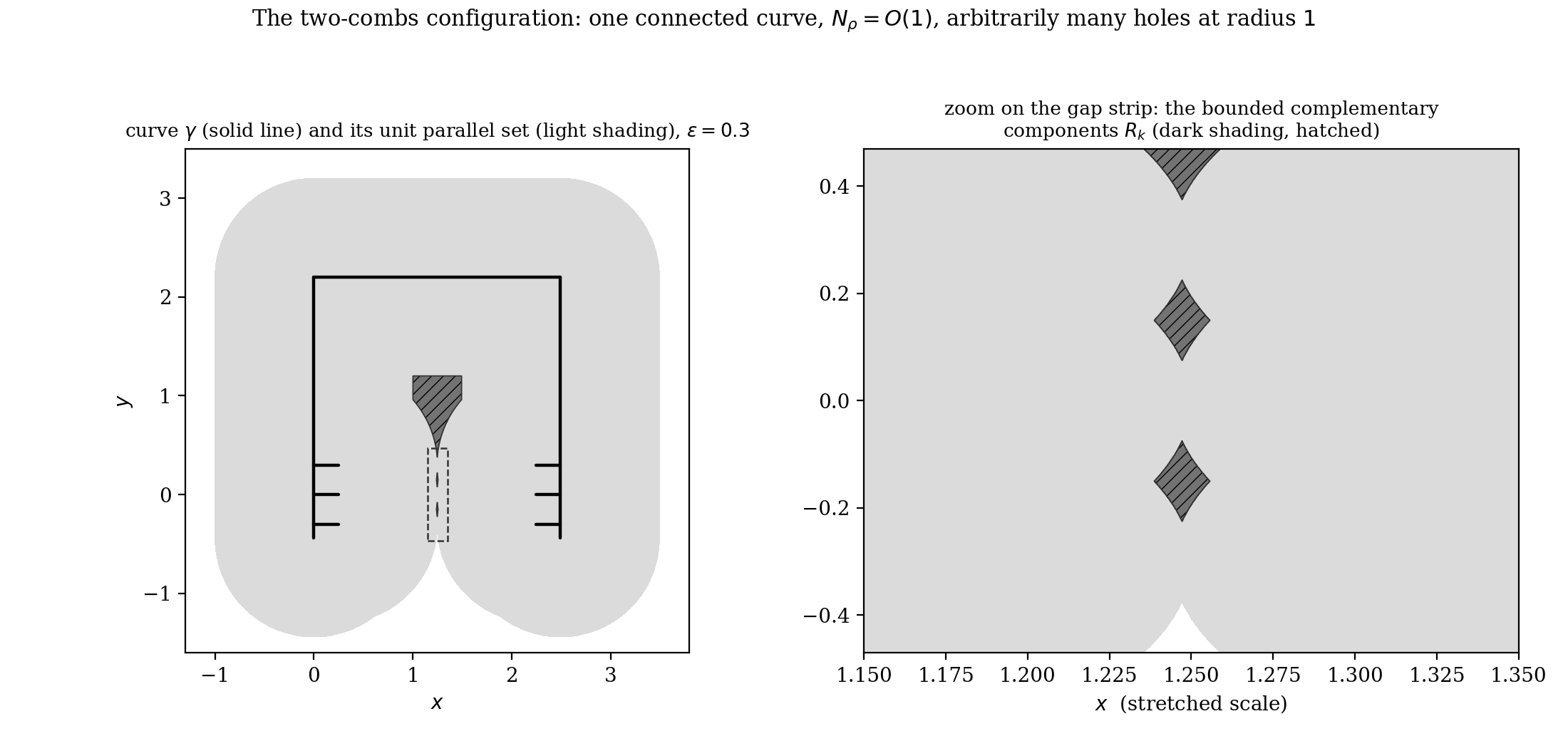}
\caption{Schematic, not to scale. The two-combs configuration, drawn at the
exaggerated value \(\varepsilon = 0.3\) and with a tooth length chosen for
visibility rather than by the relation \(\delta = \varepsilon^{2}/2\) of
Theorem~\ref{thm:combs}, which at this scale would make the cells invisible.
Left: the connected curve (solid line), its radius-\(1\) parallel set (light
shading), the bounded complementary components (dark, hatched) --- the detour
cell at the top and the cells \(R_{k}\) in the strip --- and the zoom box
(dashed). Right: the gap strip at stretched horizontal scale; the hatched
diamonds are the cells \(R_{k}\), and the hatched region at the top edge is
the tip of the detour cell. At small \(\varepsilon\) there are
\(2\lfloor 0.44/\varepsilon\rfloor\) cells.}
\label{fig:two-combs}
\end{figure}

\end{proof}

\begin{corollary}[uniformity of the coarse quantities]\label{cor:combs}
Along the family \(\left( \gamma_{\varepsilon} \right)_{0 < \varepsilon \leq 1/10}\)
of Theorem~\ref{thm:combs},
\[\operatorname{diam}\gamma_{\varepsilon} \leq 8,\quad
\operatorname{length}\left( \gamma_{\varepsilon} \right) \leq 12,\quad
N_{1/2}\left( \gamma_{\varepsilon} \right) \leq 30,\quad
\left| \gamma_{\varepsilon}^{(1)} \right| \leq 25\pi,\quad
\mathcal{H}^{1}\left( \partial\gamma_{\varepsilon}^{(1)} \right) \leq 55 .\]
Consequently, if \(F:\lbrack 0,\infty)^{5} \rightarrow \lbrack 0,\infty)\) is
bounded on bounded sets, the inequality
\[\beta_{1}\left( A^{(r)} \right)\  \leq \ F\left( \operatorname{diam}A,\ \operatorname{length}(A),\ N_{r/2}(A),\ \left| A^{(r)} \right|,\ \mathcal{H}^{1}\left( \partial A^{(r)} \right) \right)\]
fails for some connected rectifiable curve \(A\) with its unit-speed
parametrisation, at every prescribed \(r > 0\).
\end{corollary}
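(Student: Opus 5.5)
We verify the five bounds for $\gamma_{\varepsilon}$ one at a time, and then use scaling to pass from radius $1$ to an arbitrary radius $r$.

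\emph{Diameter and length.} Every point of $\gamma_{\varepsilon}$ lies in the rectangle $[0,b_{1}]\times[-H,4.2]$. Its diagonal is at most $\sqrt{(2.01)^{2}+(4.64)^{2}}<8$, so $\operatorname{diam}\gamma_{\varepsilon}\le 8$. The length bound $12$ and the increment bound $N_{1/2}\le 30$ are exactly the book-keeping in the proof of Theorem~\ref{thm:combs}.

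\emph{Area.} The curve lies in a ball of radius $4$, so $\gamma_{\varepsilon}^{(1)}$ lies in the concentric ball of radius $5$, and its area is at most $25\pi$.

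\emph{Boundary length.} Here we would use Proposition~\ref{prop:perimeter}. It gives $\mathcal{H}^{1}(\partial\gamma_{\varepsilon}^{(1)})\le 2|\gamma_{\varepsilon}^{(1)}|\le 50\pi$, but only at radii $r$ outside a null set, which may contain $1$; and $50\pi$ exceeds $55$ anyway. So we bound the boundary directly. The curve is a finite union of segments: the two spines, the $2(2K+1)$ teeth, and the three detour legs. Hence $\gamma_{\varepsilon}^{(1)}$ is a finite union of stadiums. Its boundary is contained in the union of the stadium boundaries, but that union has length of order $K\approx 1/\varepsilon$, which is unbounded. The better route is to note that $\partial\gamma_{\varepsilon}^{(1)}\subseteq\{d=1\}$, where $d=d_{\gamma_{\varepsilon}}$, and to apply a coarea/Steiner-type bound: since $|\nabla d|=1$ almost everywhere,
\[
\int_{1-\eta}^{1+\eta}\mathcal{H}^{1}(\{d=s\})\,ds=|\{1-\eta<d\le 1+\eta\}|\le 2\eta\cdot(\text{length}\times 2+2\pi(1+\eta))
\]
roughly. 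This controls the average boundary length over radii near $1$ by a constant near $2\cdot 12+2\pi\approx 30.3$, i.e.\ twice the length plus $2\pi$.

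To get the bound at $s=1$ itself, we would instead count the boundary pieces directly. There are the outer contours of the two combs and of the detour, of total length about $2\cdot 12+2\pi\le 31$. Then there are the boundaries of the cells $R_{k}$ and of the detour cell. Each $R_{k}$ is a thin diamond whose boundary consists of four circular arcs of height $\varepsilon/2-w\approx\varepsilon/4$ and horizontal width $O(\varepsilon^{2})$. Each such arc has length about $\varepsilon/4$, so each cell contributes about $\varepsilon$. Summed over the $2K\le 0.88/\varepsilon$ cells, this gives at most about $0.9$. The detour cell's boundary lies inside a $1$-neighbourhood frame of length at most $2\cdot 3.76+b_{1}+b_{1}$, about $11.5$. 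The total stays below $55$. \textbf{This step is the main obstacle}: one must check carefully that, away from the islands, the boundary arcs of the comb parallel sets are graphs of $f$ and $g$ of bounded total variation in $x$. The vertical extent is bounded by $2H$, and the horizontal oscillation over each cell is $O(\varepsilon^{2})$ across $O(1/\varepsilon)$ cells, so the total arc length of these graphs is $\int\sqrt{1+f'^{2}}\,dy$. Here $|f'|\le|u|/\sqrt{1-u^{2}}=O(\varepsilon)$, so this integral is about $2H(1+O(\varepsilon^{2}))$.

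\emph{Consequence.} Suppose $F$ is bounded by $M$ on the box $[0,8]\times[0,12]\times[0,30]\times[0,25\pi]\times[0,55]$. Choosing $\varepsilon<1/(2M)$, Theorem~\ref{thm:combs} gives $\beta_{1}(\gamma_{\varepsilon}^{(1)})\ge 1/(2\varepsilon)>M$, so the inequality fails at $r=1$. For a general prescribed $r>0$, rescale: set $A=r\gamma_{\varepsilon}$, parametrised at unit speed. Then $A^{(r)}=r\gamma_{\varepsilon}^{(1)}$ has the same hole count, and $N_{r/2}(A)=N_{1/2}(\gamma_{\varepsilon})$ because increment times scale consistently. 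The diameter and length scale by $r$, the area by $r^{2}$, and the boundary length by $r$. All five quantities therefore lie in a fixed box depending only on $r$, on which $F$ is bounded, and we pick $\varepsilon$ small enough as before.
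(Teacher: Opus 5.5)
Your bounds on diameter, length, increment count and area, and the dilation argument for general \(r\), are correct and match the paper. The gap is the perimeter bound \(\mathcal{H}^{1}(\partial\gamma_{\varepsilon}^{(1)})\le 55\), which your proposal does not prove.

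The only piece you actually compute is the cell boundaries: four unit-circle arcs of length about \(\varepsilon/4\) per cell, about \(0.9\) in total. That part is essentially right, and it is the part your flagged obstacle concerns. The other pieces are only asserted:
\begin{itemize}
\item The claim that the ``outer contours'' have length about \(2\cdot 12+2\pi\) comes with no argument. You neither prove nor cite a perimeter inequality for parallel sets of connected curves.
\item The remark that the detour cell's boundary lies inside a frame of length about \(11.5\) bounds nothing, since containment in a region gives no control of \(\mathcal{H}^{1}\).
\end{itemize}
To make your route rigorous you would need a complete description of \(\partial\gamma_{\varepsilon}^{(1)}\). That means the outer sides, the lower ends of the combs where the gap between them opens into the unbounded component, the junctions of the combs with the detour, and the boundary of the detour cell. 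The coarea identity, as you observe, only controls averages over radii near \(1\).

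The missing idea is to avoid describing \(\partial\gamma_{\varepsilon}^{(1)}\) at all. The paper writes \(\gamma_{\varepsilon}=P\cup T\), where \(P\) is the three long segments \(\{0\}\times[-H,4.2]\), \(\{b_1\}\times[-H,4.2]\), \([0,b_1]\times\{4.2\}\), and \(T\) is the union of the \(2(2K+1)\) teeth. It then uses the elementary inclusion \(\partial(P^{(1)}\cup T^{(1)})\subseteq\partial P^{(1)}\cup(\partial T^{(1)}\smallsetminus P^{(1)})\).
\begin{itemize}
\item The first set is covered by three stadium boundaries, of total length \(\le 42\).
\item For the second, a point of a tooth's stadium boundary lies outside \(P^{(1)}\) only if it is at distance \(>1\) from the adjacent spine. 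The left semicircle and the upper side never are. A point \((\delta+\cos\varphi,\,y_k+\sin\varphi)\) of the right semicircle is only on the arc \(\cos\varphi>1-\delta\), of length \(2\arccos(1-\delta)\le\pi\sqrt{2\delta}=\pi\varepsilon\).
\item The only exceptions are at most two teeth per comb next to \(y=-H\), each contributing at most a quarter circle and its lower side.
\end{itemize}
Summing, the teeth contribute \(O(K\varepsilon)=O(1)\), precisely \(\le 13\), so the total is \(\le 55\). Your intuition that the teeth create only \(O(\varepsilon^{2})\) bumps is correct; this inclusion is what turns it into a proof, with no global picture of the boundary needed.
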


\begin{proof}
The first three bounds are in the proof of Theorem~\ref{thm:combs}. Since
\(\gamma_{\varepsilon}\) lies in a ball of radius \(4\), its unit parallel set
lies in the concentric ball of radius \(5\), which gives the area bound.

For the perimeter, write \(\gamma_{\varepsilon} = P \cup T\), where \(P\) is
the union of the three segments
\(S_{A} = \{ 0\} \times \lbrack - H,4.2\rbrack\),
\(S_{B} = \left\{ b_{1} \right\} \times \lbrack - H,4.2\rbrack\) and
\(\lbrack 0,b_{1}\rbrack \times \{ 4.2\}\) (each spine together with the
adjacent vertical leg of the detour, and the horizontal leg), and \(T\) is the
union of the \(2(2K + 1)\) teeth. Then
\(\gamma_{\varepsilon}^{(1)} = P^{(1)} \cup T^{(1)}\), and
\[\partial\left( P^{(1)} \cup T^{(1)} \right)\  \subseteq \ \partial P^{(1)} \cup \left( \partial T^{(1)} \smallsetminus P^{(1)} \right),\]
since a boundary point of the union that lies in \(P^{(1)}\) is not interior
to \(P^{(1)}\), while one that does not lies in \(T^{(1)}\) and is not
interior to \(T^{(1)}\). The unit parallel set of a segment of length
\(\ell\) is a stadium of perimeter \(2\ell + 2\pi\), and the boundary of a
finite union is contained in the union of the boundaries, so
\(\mathcal{H}^{1}\left( \partial P^{(1)} \right) \leq 2\left( 4.64 + 4.64 + b_{1} \right) + 6\pi \leq 42\).

Consider a tooth \(t = \lbrack 0,\delta\rbrack \times \left\{ y_{k} \right\}\)
of comb \(A\) (comb \(B\) is symmetric). Its unit parallel set \(t^{(1)}\) is a
stadium whose boundary consists of the segments
\(\lbrack 0,\delta\rbrack \times \left\{ y_{k} \pm 1 \right\}\) and the unit
semicircles centred at \(\left( 0,y_{k} \right)\) (left) and
\(\left( \delta,y_{k} \right)\) (right). A point of \(\partial t^{(1)}\) outside
\(P^{(1)}\) is at distance \(> 1\) from \(S_{A}\). Points of the left
semicircle are at distance \(1\) from \(\left( 0,y_{k} \right) \in S_{A}\), and
points of the upper segment are within \(\delta\) of
\(\left( 0,y_{k} + 1 \right) \in S_{A}\), so neither contributes. A point
\(q = \left( \delta + \cos\varphi,\ y_{k} + \sin\varphi \right)\) of the right
semicircle with ordinate in \(\lbrack - H,4.2\rbrack\) has
\(\operatorname{dist}\left( q,S_{A} \right) = \delta + \cos\varphi\), which
exceeds \(1\) only on the arc \(\cos\varphi > 1 - \delta\), of length
\(2\arccos(1 - \delta) \leq \pi\sqrt{2\delta} = \pi\varepsilon\), by
\(1 - \cos\theta \geq 2\theta^{2}/\pi^{2}\). Finally, a point of the right
semicircle with ordinate \(y_{k} + \sin\varphi < - H\) is at distance
\(> 1\) from \(S_{A}\) only if
\(\left( \delta + \cos\varphi \right)^{2} + \left( \sin\varphi + a_{k} \right)^{2} > 1\)
with \(a_{k}: = H + y_{k} \geq 0\) and \(\sin\varphi < - a_{k}\), which forces
\(a_{k}^{2} < 2\delta + \delta^{2}\), i.e.\ \(a_{k} < 1.01\,\varepsilon\); as
the \(a_{k}\) are spaced by \(\varepsilon\), at most two teeth of comb \(A\)
satisfy this, and each contributes at most a quarter circle and its lower
segment, of length \(\leq \pi/2 + \delta\) (a point \((x,y_{k} - 1)\) of a
lower segment is at distance \(> 1\) from \(S_{A}\) only if
\((1 - a_{k})^{2} + x^{2} > 1\), which forces \(a_{k} < \delta^{2}\), a
fortiori \(a_{k} < 1.01\,\varepsilon\)). Altogether
\[\mathcal{H}^{1}\left( \partial T^{(1)} \smallsetminus P^{(1)} \right)\  \leq \ 2\left\lbrack (2K + 1)\pi\varepsilon + 2\left( \pi/2 + \delta \right) \right\rbrack\  \leq \ 2\left\lbrack (2H + \varepsilon)\pi + \pi + \varepsilon^{2} \right\rbrack\  \leq \ 13\]
for \(\varepsilon \leq 1/10\), using \((2K + 1)\varepsilon \leq 2H + \varepsilon\),
and \(\mathcal{H}^{1}\left( \partial\gamma_{\varepsilon}^{(1)} \right) \leq 42 + 13 = 55\).

For the last assertion, at \(r = 1\) all five arguments of \(F\) stay in a
fixed bounded set along the family while
\(\beta_{1}\left( \gamma_{\varepsilon}^{(1)} \right) \geq 1/(2\varepsilon)\)
diverges. For general \(r > 0\) apply the dilation \(x \mapsto rx\): it maps
\(\gamma_{\varepsilon}^{(1)}\) onto \(\left( r\gamma_{\varepsilon} \right)^{(r)}\),
preserves \(\beta_{1}\) and the increment count at the rescaled scale
\(\rho = r/2\), multiplies diameter, length and perimeter by \(r\), and area
by \(r^{2}\).
\end{proof}

\begin{remark}[consequences]\label{rem:combs}
(i) The restriction to \(F\) bounded on bounded sets is needed --- an
arbitrary function need not be --- and is satisfied by every standard
expression in these quantities, in particular by every polynomial or monotone
continuous one. Corollary~\ref{cor:combs} rules out, in particular, quadratic
bounds in the increment count, packing bounds proportional to
\(\left| A^{(r)} \right|/r^{2}\), and bounds of the form
\(\beta_{1}\left( A^{(r)} \right) \leq C\,\mathcal{H}^{1}\left( \partial A^{(r)} \right)/r\):
no fixed-radius bound can pass through area or boundary measure. (ii) The holes
live on a radius window of width \(\varepsilon^{2}/8\), so the family is
consistent with Theorem~\ref{thm:main} --- the integral it produces is in fact
\(O(\varepsilon)\) plus the contribution of the single detour cell --- and with
Lemma~\ref{lem:rsm1}, whose hypothesis fails precisely on that window: the
configuration plants a cluster of critical values of width
\(O\left( \varepsilon^{2} \right)\) just above \(r = 1\). Realizing such a
cluster requires matching two envelopes to second order; that a Brownian trace
does so only with the frequency consistent with finite expectations is exactly
the probabilistic content of the mean-Euler-characteristic theorem of
\cite[Proposition 4.1]{RSM09} and of the fixed-radius theory of
\cite[Section 8]{G26}.
\end{remark}

\section{Notes and open questions}\label{sec:notes}

\emph{The sharp form of Theorem~\ref{thm:A}.} As Remark~\ref{rem:sharp}
records, the \(N \times N\) grid at spacing comparable to \(2r_{0}\) has
\(\int_{r_{0}}^{\infty}\beta_{1}(A^{(r)})\,dr\) of order \(D^{2}/r_{0}\),
against the bound \(D^{4}/r_{0}^{3}\), and we expect the truth to be an area
bound, \(\int_{r_{0}}^{\infty}\beta_{1}(A^{(r)})\,dr \leq C|A^{(r_{0})}|/r_{0}\)
or similar. The natural packing argument --- each hole of \(A^{(r)}\) contains
an empty disc, and disjoint empty discs of radius \(\geq r_{0}\) are few ---
bounds the number of holes that survive a range of radii, but not the number
of radius-ephemeral holes that recur at one location, and the two-combs family
shows that such recurrence is possible; a proof of the area form would have to
count holes with a weight adapted to their lifetime.

\emph{Why the gap route cannot be sharpened.} The proof of
Theorem~\ref{thm:main} loses in two places: in Lemma~\ref{lem:rsm1}, whose
\(\varepsilon^{-1/2}\) singularity may be far from the truth for a particular
set, and in Lemma~\ref{lem:gaps}. The exponent \(1/2\) in the second step
cannot be uniformly reduced: by \cite[Theorem 1.1]{RZ20} every compact null
subset of \(\lbrack s,\infty)\) with summable square-root gaps is contained in
\(C(A)\) for some compact \(A\), so no summability of a smaller power of the
gaps holds in general, and
any improvement of Theorem~\ref{thm:main} must come from the way holes are
counted against critical values, not from the size of \(C(A)\).

\emph{Second moments.} The proof of Theorem~\ref{thm:main} bounds
\(\beta_{1}(A^{(r)})\) by a multiple of \(\varepsilon_{r}^{-1/2}\), where
\(\varepsilon_{r}\) is the distance from \(r\) to the next critical value, and
Lemma~\ref{lem:gaps} makes \(\varepsilon_{r}^{-1/2}\) integrable. The same
route cannot bound \(\int_{r_{0}}^{\infty}\beta_{1}(A^{(r)})^{2}\,dr\), since
\(\varepsilon_{r}^{-1}\) is not integrable across any gap: the calculus caps
the usable negative moment of the gap at the exponent \(1/2\) of Fu's theorem.
Whether a deterministic bound on the integrated square exists we do not know.
The two-combs family does not decide it: there the holes live on a radius
window of width \(\varepsilon^{2}/8\) and number about \(0.88/\varepsilon\),
so the integrated square stays bounded along the family, of order \(1/10\).
In the Brownian setting the corresponding question --- whether the hole count
of the Wiener sausage at a fixed radius has a finite second moment --- is
open; see \cite[Section 8]{G26}.

\emph{Dimension.} The argument is planar throughout: it rests on Alexander
duality in \(S^{2}\), which identifies \(\beta_{1}\) with a count of
complementary components and makes \(\check{H}^{2}\) vanish, and on the
half-dimensionality of the critical-value set, which is sharp in the
dimensional sense \cite[Section 5.2]{Fu85}. Ferry \cite{Fer76} exhibits a
compact set in \(\mathbb{R}^{4}\) whose parallel sets fail to have manifold
boundaries at noncritical radii, so already the bridge lemma
(Lemma~\ref{lem:bridge}) does not extend as stated. An integrated bound for
the Betti numbers of parallel sets in \(\mathbb{R}^{d}\) would need
replacements for both ingredients.

\emph{Related work.} The critical-point notion for distance functions goes back
to Grove \cite{Gr93} in the Riemannian setting and to Fu \cite{Fu85} for
compact subsets of Euclidean space; the smallest positive critical value, the
weak feature size of Chazal and Lieutier \cite{CL07}, and the sampling theory
of \cite{CCL09} exploit the constancy of the topology of parallel sets on
critical-value-free intervals, which is the same mechanism that Lemma~\ref{lem:rsm1}
quantifies. The smallness of \(C(A)\) for planar compacta is studied in
depth by Rataj and Zaj\'{\i}\v{c}ek \cite{RZ20}, who show that the sets
\(C(A) \cap \lbrack s,\infty)\) are exactly as small as the compact null sets
with summable square-root gaps, in the embedding sense above, determine the
Minkowski dimension of
\(C(A)\) as \(4/5\) in the worst case, and prove
\(\int_{0}^{\infty}\mathcal{H}^{1}\left\{ x\ \text{critical}:d_{A}(x) > r \right\}dr < \infty\)
\cite[Proposition 3.5]{RZ20}, the nearest relative of Theorem~\ref{thm:main}
we know of: an integral over the radius of a measure of the critical set
above height \(r\), where Theorem~\ref{thm:main} integrates a count of holes.
Theorem~\ref{thm:fu} is used in \cite{RSM09} to prove that the planar Wiener
sausage has finitely many holes in expectation at a fixed radius;
Theorem~\ref{thm:main} is the deterministic, integrated counterpart of that
argument, and the Brownian applications are developed in \cite{G26}.

\section*{Declaration of competing interest}

The author declares that there are no known competing financial interests or
personal relationships that could have appeared to influence the work reported
in this paper.

\section*{Declaration of generative AI and AI-assisted technologies in the writing process}

During the preparation of this work the author used Anthropic's Claude in order
to typeset the manuscript source, to check internal cross-references and
bibliographic details, and to assist with editorial revision and with the
drafting of expository text. The results reported here and the mathematical
program they belong to are the author's own. The author has reviewed, checked
and edited all content, and takes full responsibility for the content of the
publication.


\begin{thebibliography}{99}

\bibitem{CCL09} Chazal, F., Cohen-Steiner, D., and Lieutier, A. (2009). A sampling theory for compact sets in Euclidean space. \emph{Discrete \& Computational Geometry} \textbf{41}(3), 461--479.

\bibitem{CL07} Chazal, F., and Lieutier, A. (2007). Stability and computation of topological invariants of solids in \(\mathbb{R}^{n}\). \emph{Discrete \& Computational Geometry} \textbf{37}(4), 601--617.

\bibitem{ref12} Edelsbrunner, H., and Harer, J. (2010). \emph{Computational Topology: An Introduction}. American Mathematical Society, Providence, RI.

\bibitem{Fer76} Ferry, S. (1975/76). When \(\varepsilon\)-boundaries are manifolds. \emph{Fundamenta Mathematicae} \textbf{90}(3), 199--210.

\bibitem{Fu85} Fu, J. H. G. (1985). Tubular neighborhoods in Euclidean spaces. \emph{Duke Mathematical Journal} \textbf{52}(4), 1025--1046.

\bibitem{Gr93} Grove, K. (1993). Critical point theory for distance functions. In \emph{Differential Geometry: Riemannian Geometry} (Los Angeles, CA, 1990), Proceedings of Symposia in Pure Mathematics \textbf{54}, Part 3, 357--385. American Mathematical Society, Providence, RI.

\bibitem{G26} Guillaume, T. (2026). Persistence of the Wiener sausage: sampling stability and limit theorems for Betti-curve functionals of drifted planar Brownian motion. Preprint.

\bibitem{HNPS19} Holden, N., Nacu, \c{S}., Peres, Y., and Salisbury, T. S. (2019). How round are the complementary components of planar Brownian motion? \emph{Annales de l'Institut Henri Poincar\'e, Probabilit\'es et Statistiques} \textbf{55}(2), 882--908.

\bibitem{Ho14} Honzl, O. (2014). On an upper bound of the Euler characteristic of the Wiener sausage. \emph{Methodology and Computing in Applied Probability} \textbf{16}(2), 331--353.

\bibitem{Kne51} Kneser, M. (1951). \"Uber den Rand von Parallelk\"orpern. \emph{Mathematische Nachrichten} \textbf{5}, 241--251.

\bibitem{LG90} Le Gall, J.-F. (1990). Wiener sausage and self-intersection local times. \emph{Journal of Functional Analysis} \textbf{88}(2), 299--341.

\bibitem{Mo89} Mountford, T. S. (1989). On the asymptotic number of small components created by planar Brownian motion. \emph{Stochastics and Stochastics Reports} \textbf{28}(3), 177--188.

\bibitem{RSS09} Rataj, J., Schmidt, V., and Spodarev, E. (2009). On the expected surface area of the Wiener sausage. \emph{Mathematische Nachrichten} \textbf{282}(4), 591--603.

\bibitem{RZ20} Rataj, J., and Zaj\'{\i}\v{c}ek, L. (2020). Smallness of the set of critical values of distance functions in two-dimensional Euclidean and Riemannian spaces. \emph{Mathematika} \textbf{66}, 297--324.

\bibitem{RSM09} Rataj, J., Spodarev, E., and Meschenmoser, D. (2009). Approximations of the Wiener sausage and its curvature measures. \emph{Annals of Applied Probability} \textbf{19}(5), 1840--1859.

\bibitem{Sp81} Spanier, E. H. (1981). \emph{Algebraic Topology}. Springer, New York.

\bibitem{Sta76} Stach\'o, L. L. (1976). On the volume function of parallel sets. \emph{Acta Scientiarum Mathematicarum (Szeged)} \textbf{38}, 365--374.

\end{thebibliography}
\end{document}